\newtheorem{theorem}{Theorem}[section]
\newtheorem{lemma}{Lemma}[section]
\newtheorem{proposition}{Proposition}[section]
\def\tire{\thinspace--\thinspace}
\let\epsilon=\varepsilon
\let\phi=\varphi
\title{Convergence to equilibrium due to collisions with external particles}
\author{A.A.~Lykov \thanks{Mechanics and Mathematics Faculty, Lomonosov Moscow
State University, Leninskie Gory~1, Moscow, 119991, Russia. E-mail:~alekslyk@yandex.ru, 2malyshev@mail.ru}, V.A.~Malyshev\footnotemark[1]
}
\begin{document}
\maketitle


\begin{abstract}
  We consider a class of ``most non ergodic'' particle systems and prove
  that for most of them ergodicity appears if only one particle of $N$ has
  contact with  external world, that is this particle  collides with
  external particles in random time moments.
\end{abstract}


\tableofcontents{}

\section{Introduction}

In this paper we prove the main results of our previous paper \cite{LM_4}
under more realistic assumptions. In \cite{LM_4} the external force
was assumed to consist of 2 terms -- continuous time random gaussian
process (white noise) and simplest deterministic dissipative force
$-\alpha v$. Now we consider the collisions of one internal particle
with external particles having random velocities at random time moments.
We show that the results appear to be quite similar to \cite{LM_4}.

\section{Model and results}

\subsection{Hamiltonian dynamics}

We consider system of $N$ point particles in ${\bf R}^{d}$ with
the linear phase space 
\[
L=\mathbb{{\bf R}}^{2dN} \! =\! \Bigl\{\psi=\left(\begin{array}{c}
q\\
p
\end{array}\right):\, q=\! (q_{1},\ldots ,q_{N})^{T},\, p=\! (p_{1},\ldots ,p_{N})^{T},\ p_{k},q_{k}\in{\bf R'}^{d}\Bigr\},
\]
where $()^T$ means transposition. Coordinates and momenta $q_{k}=(q_{k1},\ldots ,q_{kd})^T$, $p_{k}=(p_{k1},\ldots ,p_{kd})^T,$
are also considered as column vectors, so $\psi$ is a column vector.
$L$ has obvious direct sum representation 
\[
L=l_{2}^{(dN)}\oplus l_{2}^{(dN)},\; l_{2}^{(dN)}={\bf R}^{dN}.
\]
Introduce the Hamiltonian 
\begin{equation}
H(\psi)=\sum_{k=1}^{N}\frac{|p_{k}|^{2}}{2M}+\frac{1}{2}(q,Vq),\label{Ham1}
\end{equation}
where $M>0$, and $V$ is a positive definite $(dN\times dN)$-matrix.

The evolution is defined by the following Hamiltonian system of equations
on $L$: 
\begin{align}
\dot{q}_{kj}= & \frac{\partial H}{\partial p_{kj}},\label{L1}\\
\dot{p}_{kj}= & -\frac{\partial H}{\partial q_{kj}},\label{L2}
\end{align}
where $k=1,\ldots,N,j=1,\ldots ,d$. The system (\ref{L1}), (\ref{L2})
can be rewritten as
\begin{equation}
\dot{\psi}=A\psi,\quad \,\,\,A=\left(\begin{array}{cc}
0 & \tfrac{1}{M}E\\
-V & 0
\end{array}\right)\label{LHamVec}
\end{equation}
The solution $\psi(t)$ of (\ref{LHamVec}) with initial conditions
$\psi(0)$ is 
\[
\psi(t)=e^{tA}\psi(0).
\]

\subsection{Collisions}

Fix some particle, say with number $n=1$ and assume that at random
time moments 
\[
0=t_{0}<t_{1}<\ldots<t_{k}<\ldots
\]
the velocity of this particle changes instantaneously somehow, that
is there are jumps $p_{1}(t_{k}-)\to p_{1}(t_{k})$. On the intervals
$[t_{k-1},t_{k})$ the dynamics is defined by the equations (\ref{L1}),
(\ref{L2}) with the corresponding initial conditions $\psi(t_{k-1})$.
We interpret the nature of these jumps as collisions with external
particles on short time scale.

We use three exact models of ``collisions'': concrete 1-1, 1-$d$,
and abstract, 2.

\paragraph{Condition 1-1}

Assume that masses of external particles are equal to $m$ and internal
masses are equal to $M>m$ (this condition is not necessary for our
goals, but here it is technically convenient). We also assume that
$d=1$ and the collision conserves energy and momentum for this two
particle system. Then it is known that (see for example \cite{M_Intro_1})
\begin{equation}
v(t_{k})=\alpha v(t_{k}-)+(1-\alpha)u_{k}, \; 0<\alpha=\frac{M-m}{M+m}<1\label{cond_1-1}
\end{equation}
or 
\[
p(t_{k})=\alpha p(t_{k}-)+(1-\alpha)Mu_{k}
\]
where $v(t)$ is the velocity of the particle 1 at time $t$, and
$u_{k}$ is the velocity of external particle with which occurs the
collision at time $t_{k}$.

\paragraph{Condition 1-d}

For dimension $d>1$ we assume transformation 
\begin{equation}
v(t_{k})=Rv(t_{k}-)+w_{k},\label{cond_1-2}
\end{equation}
with some matrix (possibly random) $R$ satisfying the following conditions 
\begin{enumerate}
\item the distribution of the vector $w_{k}$ has everywhere positive density
on $\mathbb{{\bf R}}^{d}$; 
\item the matrix $R$ defines a contraction map of $\mathbb{{\bf R}}^{d}$,
i.e.\ there exists constant $0<\alpha<1$ such that 
\[
(Rp,Rp)\leqslant\alpha(p,p)
\]
for all $p\in\mathbb{{\bf R}}^{d}$. Or equivalently, the spectrum
of the matrix $RR^{T}$ lies in the open unit interval. 
\end{enumerate}

\paragraph{Condition 2}

These are more general but seemingly more technical conditions: 
\begin{equation}
p_{1}(t_{k})=J(\xi_{k},p_{1}(t_{k}-))\label{cond_2}
\end{equation}
where $\xi_{k}$ is a random $l$-dimensional vector $l\geqslant1$,
and 
\[
J:{\bf R}^{l}\times{\bf R}^{d}\to\mathbb{{\bf R}}^{d}=\{p_{1}\}
\]
is some vector function. For example, in case 1-1 the vector $\xi_{k}=u_{k}$.

Transformation $J$ is assumed to have the following properties: 
\begin{enumerate}
\item $J$ is everywhere differentiable and analytic almost (w.r.t.\ Lebesgue
measure) for any point $(g,p)\in\mathcal{O}_{\xi}\times{\bf R}^{d}$,
where $\mathcal{O}_{\xi}$ is some open set in ${\bf R}^{l}$. For
example, in case 1-1 $\mathcal{O}_{\xi}={\bf R}^{d}$. 
\item For any $p\in{\bf R}^{d}$ the image $J(\mathcal{O}_{\xi},p)\subset\mathbb{{\bf R}}^{d}$
contains sphere of radius $|p|$, and for any $h\geqslant0$ there
exists $g=g(p,h)\in\mathcal{O}_{\xi}$ such that 
\[
|J(g;p)|=h.
\]
\item There exists compact subset $K\subset L$ and positive constant $\alpha<1$
such that for any $p\notin K$ the following inequality holds: 
\[
E_{\xi}|J(\xi;p))|^{2}\leqslant\alpha|p|^{2},
\]
Intuitively this means that kinetic energy of particle 1, being sufficiently
large, should be decreased by collisions. 
\end{enumerate}

\paragraph{Assumptions on the nature of randomness}

We assume that: 
\begin{enumerate}
\item $\tau_{m}=t_{m}-t_{m-1},m=1,2,\ldots ,$ are i.i.d.\ random variables with
positive on $R_{+}$ density $\rho_{\tau}$ such that $E\tau_{1}<\infty$;
\item random vectors $\xi_{1},\xi_{2},\ldots,\xi_{m},\ldots$ are i.i.d.\ 
with distribution function $F_{\xi}$. Assume that $F_{\xi}$ has
density $\rho_{\xi}$ with respect to Lebesgue measure on $\mathbb{{\bf R}}^{l}$,
which is positive on some open set $\mathcal{O}_{\xi}\subset\mathbb{{\bf R}}^{l}$
and such that $P(\xi_{1}\in\mathcal{O}_{\xi})=1$; 
\item the arrays $(\xi_{1},\xi_{2},\ldots,\xi_{m},\ldots)$ and $(\tau_{1},\tau_{2}\ldots,\tau_{m},\ldots)$
are mutually independent. 
\end{enumerate}

\paragraph{Dynamics with collisions}

For $\psi=(q,p)\in L,\ q=(q_{1},\ldots,q_{N})^{T},\ p=(p_{1},\ldots,p_{N})^{T}$
and $\xi\in\mathbb{{\bf R}}^{l}$ we define the transformation: 
\[
J_{L}(\xi;\psi)=\psi'=(q',p'),
\]
where 
\[
q'=q,\quad p_{1}'=J(\xi,p_{1}),\ p_{k}'=p_{k},\ k>1.
\]
For simplicity we will omit index $L$ and denote $J_{L}(\cdot,\cdot)$
simply $J(\cdot,\cdot)$. For any $t\geqslant0$ we will consider
the following transformations of $L$, 
\[
J(t,\xi;\psi)=J(\xi;e^{tA}\psi),\ \psi\in L.
\]
For any $\psi\in L$ and any integer $m\geqslant1$ define the time
lengths $\tau_{1}=t_{1},\tau_{2}=t_{2}-t_{1},\ldots,\tau_{m}=t_{m}-t_{m-1}$.
For any $t\geq0$ we define the dynamics with collisions as follows
\[
\psi(t)=e^{A(t-t_{m})}\psi_{m},\ \psi_{m}=J(\tau_{m},\xi_{m};\psi_{m-1}),\ \psi_{0}=\psi(0).
\]
where $m$ is the maximal integer such that $t_{m}<t$.

Note that if $\tau_{k}$ are exponentially distributed then $\psi(t)$
is a Markov process. Moreover, the defined random process $\psi(t)$
is piecewise-deterministic continuous time Markov process with trajectories
continuous from the right.

\subsection{Dissipative subspace}

Denote by $\mathbf{H}$ the set of all positive definite $dN\times dN$
matrices $V$ . Note that the set of all symmetric matrices is the
linear space of dimension $\frac{dN(dN+1)}{2}$, and $\mathbf{H}$
is its open subset with induced topology and induced Lebesgue measure
$\lambda$ from the space of symmetric matrices.

Define the dissipative subspace 
\[
L_{-}(V)=L_{-}=\left\{\left(\begin{array}{c}
q\\
p
\end{array}\right)\in L:q,p\in l_{V}\right\}
\]
where $l_{V}$ is the subspace of ${\bf R}^{dN}$ generated by the
vectors $V^{k}e_{1},\ k=0,1,2\ldots$, where $e_{n},n=1,\ldots ,dN$ is
the standard basis of $l^{(dN)}={\bf R}^{dN}$.

We say that $V$ has\textbf{ completeness} property if 
\begin{equation}
L_{-}(V)=L. \label{completeness}
\end{equation}
Further we denote $\omega_{1}^{2},\ldots,\omega_{dN}^{2}$ all (positive)
eigenvalues of $V$.

Denote by $\mathbf{H}^{+}$ the set of positive definite $dN\times dN$
matrices $V$ having completeness property. Denote also $\mathbf{H}^{++}\subset\mathbf{H}^{+}$
the subset of matrices with completeness property and, moreover, having
spectrum such that $\omega_{1},\ldots \omega_{dN}$ are rationally independent.
Further on we assume all $\omega_{k}$ to be positive.

\begin{proposition}\label{prop_dissip}
The set $\mathbf{H}^{++}$ is everywhere dense in the set $\mathbf{H}$
of all positive definite $dN\times dN$ matrices, and moreover, the
complement $\mathbf{H}\smallsetminus\mathbf{H}^{++}$ has Lebesgue
measure zero.
\end{proposition}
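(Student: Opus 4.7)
The plan is to treat the two defining conditions of $\mathbf{H}^{++}$ separately: cyclicity of $e_1$ for $V$ (i.e.\ completeness) and rational independence of the frequencies $\omega_k=\sqrt{\lambda_k(V)}$. For each I will show that the set of bad $V$'s lies in a Lebesgue-null, nowhere-dense subset of $\mathbf{H}$, after which density and full measure of $\mathbf{H}^{++}$ follow immediately by a countable union.

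For completeness, observe that by Cayley--Hamilton the condition $L_-(V)=L$ is equivalent to nonvanishing of $P(V):=\det[e_1,Ve_1,\dots,V^{dN-1}e_1]$, which is a polynomial in the entries of $V$. Hence $\{P=0\}$ is either all of $\mathbf{H}$ or a proper real-algebraic subvariety of measure zero. To exclude the first case it suffices to exhibit one example: take $V$ symmetric, tridiagonal, with nonzero off-diagonal entries and diagonal large enough to be positive definite. A short induction then yields $V^{k-1}e_1\in\mathrm{span}(e_1,\dots,e_k)$ with nonzero $e_k$-coefficient, so the Krylov matrix is upper triangular with nonzero diagonal and $P(V)\ne 0$. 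This shows $\mathbf{H}^+$ is open, dense, and of full measure in $\mathbf{H}$. As a by-product every $V\in\mathbf{H}^+$ automatically has simple spectrum: cyclicity of $e_1$ forces the minimal polynomial to coincide with the characteristic polynomial, and the minimal polynomial of a symmetric matrix has only simple roots.

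On $\mathbf{H}^+$ the simple eigenvalues $\lambda_k$, and hence the positive $\omega_k$, depend locally real-analytically on $V$. For each nonzero $n=(n_1,\dots,n_{dN})\in\mathbb{Z}^{dN}$ set $f_n(V):=\sum_k n_k\omega_k(V)$; by the standard dichotomy for real-analytic functions, on any connected open subset either $f_n\equiv 0$ or $\{f_n=0\}$ has empty interior and Lebesgue measure zero. To rule out the first alternative I restrict to the segment $V_t=V+tI$, which remains in $\mathbf{H}^+$ for small $|t|$ (adding a multiple of $I$ does not alter the Krylov subspace) and has eigenvalues $\lambda_k+t$. If $f_n(V_t)\equiv 0$ in a neighborhood of $t=0$, expanding $\sqrt{\lambda_k+t}$ in powers of $t$ and matching Taylor coefficients yields $\sum_k n_k\lambda_k^{1/2-m}=0$ for every integer $m\geqslant 0$; a Vandermonde argument on the distinct positive numbers $\lambda_k^{-1}$ then forces $n_k=0$, contradicting the choice of $n$. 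Taking the countable union over $n\in\mathbb{Z}^{dN}\setminus\{0\}$ shows that $\mathbf{H}^{++}$ is the complement in $\mathbf{H}^+$ of a meager null set, which gives the proposition.

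The genuine content of the proof is concentrated in the non-triviality step for $f_n$: the Vandermonde computation along the line $V+tI$ is the one place where an actual calculation is required, and it is the step I expect to be the main obstacle to write cleanly. The remaining pieces---the polynomial characterization of cyclicity, the fact that real-analytic zero sets are meager and null, and the passage to a countable cover---are routine; note that connectedness of $\mathbf{H}^+$ is not needed since the analyticity argument can be carried out locally.
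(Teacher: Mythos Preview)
Your proof is correct. The treatment of completeness matches the paper's in spirit: both argue that $\mathbf{H}\setminus\mathbf{H}^+$ is the zero locus of a nontrivial polynomial in the entries of $V$. You make this explicit via the Krylov determinant $P(V)=\det[e_1,Ve_1,\dots,V^{dN-1}e_1]$ and the tridiagonal example, while the paper simply cites an earlier work for the statement that this complement is an algebraic set.

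For rational independence your route differs genuinely from the paper's. You exploit that on $\mathbf{H}^+$ the spectrum is simple (a nice by-product of cyclicity that the paper does not isolate), so the eigenvalues and hence $f_n(V)=\sum_k n_k\omega_k(V)$ are locally real-analytic; then the one-parameter family $V+tI$ together with a Vandermonde argument on the distinct $\lambda_k^{-1}$ shows $f_n\not\equiv 0$, and the real-analytic dichotomy finishes. The paper instead performs a global dimension count: for each rational tuple $\alpha$ it writes $\mathbf{H}_\alpha=\{V:\sum_k\alpha_k\sqrt{\mu_k(V)}=0\}$ as the image of $D_\alpha\times O(N)$ under $(D,C)\mapsto C^{-1}DC$, where $D_\alpha$ is the corresponding hypersurface in the positive diagonal matrices, and observes that $\dim D_\alpha+\dim O(N)=(N-1)+N(N-1)/2<N(N+1)/2=\dim\mathbf{H}$, so the image has Lebesgue measure zero. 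The paper's argument is shorter and works directly on all of $\mathbf{H}$ without first restricting to the simple-spectrum locus; yours is more self-contained, avoids any orbit-geometry reasoning, and yields the extra structural fact that $\mathbf{H}^+$ already lies inside the simple-spectrum set.
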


\subsection{Main results}

In the theorems below the above mentioned assumptions are always assumed.

\paragraph{Ergodicity}

\begin{theorem} \label{thConvProcess} There exists probability measure
$\pi$ on $L$ such that:

1) it is absolutely continuous with respect to Lebesgue measure $\lambda$
on $L$ and has positive density;

2) for any measurable bounded function $f$ on $L$ and any initial
condition $\psi(0)$ 
\[
\lim_{T\to\infty}\frac{1}{T}\int_{0}^{T}f(\psi(t))dt=\int_{L}f(\psi)\pi(d\psi),\quad\mbox{a.s.}
\]
\end{theorem}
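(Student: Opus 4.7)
The strategy is to analyse the embedded discrete-time Markov chain $\{\psi_m\}$ sampled at the collision epochs $t_m$. Since $(\tau_m,\xi_m)$ are i.i.d.\ and independent of the past, the recursion $\psi_m=J(\tau_m,\xi_m;\psi_{m-1})$ defines a genuine Markov chain on $L$ with one-step kernel $P(\psi,\cdot)=\mathrm{Law}(J(\tau,\xi;\psi))$. I would first prove that $P$ is positive Harris recurrent with a unique invariant probability $\pi_{\mathrm{disc}}$ absolutely continuous with positive density, and then lift this to the continuous-time process by a semi-Markov renewal argument. By Proposition~\ref{prop_dissip} it is enough to work on the generic set $\mathbf{H}^{++}$, so one may assume completeness and rational independence of the frequencies $\omega_1,\ldots,\omega_{dN}$.

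For the Foster--Lyapunov drift I take $V(\psi)=H(\psi)$. Between collisions energy is preserved, and at a collision only $p_1$ changes, so the increment is $(2M)^{-1}(|J(\xi_m;p_1(t_m-))|^2-|p_1(t_m-)|^2)$. Condition~3 on $J$ gives $E|J(\xi;p)|^2\le\alpha|p|^2$ for $|p|$ large. The delicate point is that $p_1(t_m-)$ can be small even when $H(\psi_{m-1})$ is large, so a single-step drift is insufficient. However, the free flow $e^{tA}$ is quasi-periodic with frequencies $\omega_j$, and completeness ensures that every normal mode has a nonzero projection onto $e_1$; combined with rational independence this yields, via time-averaging against the density $\rho_\tau$, a lower bound $E_\tau|p_1(t_m-)|^2\ge c\,H(\psi_{m-1})$. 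Iterating over a fixed finite number of steps $k$ therefore produces a genuine multi-step drift $E[H(\psi_{m+k})\mid\psi_m]\le\beta H(\psi_m)+C$ with $\beta<1$, and sublevel sets of $H$ become return sets.

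For the minorization, positivity of $\rho_\xi$ on $\mathcal O_\xi$ together with property 2 of $J$ imply that $P(\psi,\cdot)$ charges the $p_1$-coordinate with a density that is positive on an open set. To spread this density to the remaining $2dN-d$ directions I would show that, for some $n$ of order $2dN$, the map
\[
\Phi:(\tau_1,\xi_1,\ldots,\tau_n,\xi_n)\longmapsto \psi_n
\]
is a submersion at some parameter value in the support of $\rho_\tau^{\otimes n}\otimes\rho_\xi^{\otimes n}$. The infinitesimal directions available at each collision span the $p_1$-block, and these generate all of $L$ under $e^{tA}$ precisely because $\{A^k(0,e_1)^T\}_{k\ge 0}$ spans $L$ by completeness; rational independence of the $\omega_j$ rules out resonant rank drops. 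A constant-rank argument then gives a density for $P^n(\psi,\cdot)$ bounded below by a positive constant on an open ball, uniformly on compacts. Harris' theorem yields the unique $\pi_{\mathrm{disc}}$, with positive Lebesgue density and geometric ergodicity of $P^n$. The continuous-time stationary measure is then given by the standard semi-Markov formula $\pi(d\psi)=(E\tau_1)^{-1}\int_0^\infty P(\tau_1>s)\,\bigl[(e^{sA})_*\pi_{\mathrm{disc}}\bigr](d\psi)\,ds$, which inherits absolute continuity with positive density because $e^{sA}$ preserves $\lambda$. The a.s.\ time-average claim follows from the SLLN for Harris chains applied to the additive functional $\sum_{m<N}\int_{t_m}^{t_{m+1}}f(\psi(t))\,dt$ combined with $t_N/N\to E\tau_1$.

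The main obstacle is the controllability step in the minorization: only the $d$-dimensional $p_1$-block is directly randomized at each collision, so the density must be propagated to the remaining coordinates purely through the deterministic flow and the randomness of the waiting times. Completeness supplies only the algebraic (Kalman-type) rank condition; to extract a strictly positive transition density one needs a quantitative Hörmander-style computation showing that the Jacobian of $\Phi$ has full rank on a set of positive parameter measure, which is where rational independence of the $\omega_j$ becomes essential.
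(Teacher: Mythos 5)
Your overall architecture (embedded chain at collision epochs, Foster--Lyapunov drift with $f=H$, irreducibility via controllability, semi-Markov lift to continuous time with the formula $\pi=(E\tau_1)^{-1}\int_0^\infty P(\tau_1>s)\,(e^{sA})_*\pi_{\mathrm{disc}}\,ds$ and $t_N/N\to E\tau_1$) matches the paper's, and the lift itself is exactly what the paper does by reference to \cite{LM_6}. But there is a genuine gap at the step you yourself flag as ``the main obstacle'': you never actually establish the minorization. Your plan is to prove that $\Phi:(\tau_1,\xi_1,\ldots,\tau_n,\xi_n)\mapsto\psi_n$ is a submersion by a quantitative H\"ormander/Kalman rank computation, and you only observe that $\{A^kg_1\}_{k\ge0}$ spans $L$ by completeness. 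That span condition concerns the linearization of the free flow alone; it does not by itself control the rank of the composed map in which the nonlinear collision maps $J(\xi_k;\cdot)$ are interleaved with the flow, and no such Jacobian computation is supplied. The paper closes exactly this step by a different mechanism that avoids any rank computation: using property 2 of Condition 2, each collision can \emph{realize a velocity flip} $p_{1,1}\mapsto -p_{1,1}$, so Theorem 1 of \cite{LM_6} gives exact reachability of the whole energy shell, and property 2 again lets one change energy levels; this yields the strong controllability Theorem \ref{overallL}, $\mathcal{J}_m(\psi)=L$ for all $\psi$. Surjectivity of the \emph{analytic} map $G^\psi$ then forces, via Sard's theorem, the existence of a regular point, analyticity makes the critical set Lebesgue-null, and Ponomarev's theorem \cite{Ponomarev} upgrades this to full equivalence of $P^m(\psi,\cdot)$ with Lebesgue measure (Lemma \ref{zeroSet-1}). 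Note also that your local submersion plan, even if completed, would give a density bounded below only on a ball, which suffices for Harris recurrence but not for conclusion 1) of the theorem (everywhere-positive density of $\pi$); the paper gets positivity precisely from the global equivalence $P^m(\psi,\cdot)\sim\lambda$ together with $\pi=\pi P^m$.

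Two smaller corrections. First, your worry that ``a single-step drift is insufficient'' because $p_1(t_m-)$ may be small is unfounded here: since $\tau$ has a density positive on all of $\mathbb{R}_+$, the one-step conditional expectation already contains the time average, and $\Delta(\psi)=E\{|p_1^0(\tau)|^2\}$ is a quadratic form that vanishes only if $p_1^0(t)\equiv0$, hence only at $\psi=0$ by completeness and Proposition \ref{prop61}; equivalence of positive-definite forms then gives $\Delta(\psi)\ge rH(\psi)$ and the one-step drift of Lemma \ref{LF_2} — no iteration over $k$ steps and no rational independence of the $\omega_j$ is needed at this point. Second, ``by Proposition \ref{prop_dissip} it is enough to work on $\mathbf{H}^{++}$'' misstates the logic: genericity of $\mathbf{H}^{++}$ does not let you perturb a fixed $V$; the spectral hypotheses are standing assumptions of the theorem, not something reducible to a full-measure set after the fact.
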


\paragraph{Stronger convergence of the embedded chain}

Denote by $\psi_{n}$ the embedded chain: 
\begin{equation}
\psi_{n}=\psi(t_{n}).\label{initChain}
\end{equation}
It is clear that $\psi_{n}$ is a discrete time Markov chain. Let
$P(\psi,A),\psi\in L,\ A\subset\mathcal{B}(L),$ be the transition
probability of this chain, where $\mathcal{B}(L)$ is the Borel $\sigma$-algebra
in $L$.

\begin{theorem} \label{thConvChain} $\psi_{n}$ has  unique invariant
measure $\pi$ on $L$ (up to multiplication on a constant). Moreover,
it has the following properties:
\begin{enumerate}
\item $\pi$ is the same in the previous theorem. Further on, we will suppose,
that $\pi(X)=1$.
\item For any $\psi\in L$ we have: 
$
\sup_{A\in\mathcal{B}(L)}\left|P^{k}(\psi,A)-\pi(A)\right|\rightarrow0,\ \mbox{as}\ k\rightarrow\infty.
$
\end{enumerate}
\end{theorem}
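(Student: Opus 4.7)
The plan is to verify the standard conditions of Meyn-Tweedie theory for the discrete-time chain $\psi_n$ on $L$: $\varphi$-irreducibility with $\varphi$ absolutely continuous w.r.t.\ Lebesgue measure, aperiodicity, existence of a compact small set, and a geometric Foster-Lyapunov drift. Together these give uniqueness of the invariant probability $\pi$ and the total-variation convergence $\sup_A|P^k(\psi,A)-\pi(A)|\to0$. The completeness of $V$ will enter twice: to transfer the randomness injected through particle~$1$ to all directions of $L$, and to convert the expected change of $H$ per step into a genuine negative drift.

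\emph{Density and irreducibility.} Consider the smooth map
\[
G_k(\tau_1,\xi_1,\dots,\tau_k,\xi_k)=J(\xi_k,e^{\tau_k A}\cdots J(\xi_1,e^{\tau_1 A}\psi)),
\]
whose distribution is $P^k(\psi,\cdot)$. For $k$ large enough that $k(1+l)\geq 2dN$ I would check that the differential of $G_k$ has rank $2dN$ somewhere. An infinitesimal perturbation of $\xi_j$ moves $\psi_j$ in the $p_1$-block (the Jacobian $\partial_\xi J$ has rank $d$ by Conditions 1-d/2) and is then propagated by subsequent Hamiltonian flows. The image of the differential therefore contains the span of $\{e^{t_j A}\Pi_1^T v:v\in\mathbb{R}^d,\ j=1,\dots,k\}$ for arbitrary times $t_j$, where $\Pi_1$ denotes projection onto the $p_1$-block. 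If $\psi^\ast\in L$ were orthogonal to this span, analyticity in $t$ would give $\Pi_1 e^{tA^T}\psi^\ast\equiv 0$; iterated differentiation at $t=0$ would then make $\psi^\ast$ orthogonal to every $(A^T)^j\Pi_1^T v$, which by the completeness property of $V$ forces $\psi^\ast=0$. Hence the differential is generically of full rank, and together with the positivity of $\rho_\tau$ on $\mathbb{R}_+$ and $\rho_\xi$ on $\mathcal{O}_\xi$ this yields a Lebesgue density of $P^{n_0}(\psi,\cdot)$ that is strictly positive on $L$. Lebesgue-irreducibility and aperiodicity follow at once.

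\emph{Drift.} Take $W(\psi)=1+H(\psi)$. Since the Hamiltonian flow preserves $H$,
\[
H(\psi_n)-H(\psi_{n-1})=\tfrac{1}{2M}\bigl(|J(\xi_n,p_1')|^2-|p_1'|^2\bigr),\qquad p_1':=p_1(e^{\tau_n A}\psi_{n-1}).
\]
Condition~2 gives $E_\xi|J(\xi,p)|^2\leq\alpha|p|^2+C_0$, the constant $C_0$ absorbing the bounded contribution from $p\in K$. The crucial analytic input is the lower bound
\[
E_\tau|p_1(e^{\tau A}\psi)|^2=\langle\psi,B\psi\rangle\geq c\,H(\psi),\qquad c>0,
\]
where $B=\int_0^\infty e^{\tau A^T}\Pi_1^T\Pi_1 e^{\tau A}\rho_\tau(\tau)\,d\tau$; the same analyticity-plus-completeness argument as in Step~1 shows $B$ is positive definite, and the equivalence $H(\psi)\asymp|\psi|^2$ (controlled by the spectrum of $V$ and $M$) converts this into the displayed inequality. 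Substitution gives $E[H(\psi_n)\mid\psi_{n-1}]\leq(1-\gamma)H(\psi_{n-1})+C$ for some $\gamma\in(0,1)$, the required geometric drift on $W$.

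\emph{Small set and conclusion.} For any $R>0$ the compact sublevel set $\{W\leq R\}$ is small: the density of $P^{n_0}(\psi,\cdot)$ is jointly continuous and strictly positive in $\psi$, so by compactness there is a common minorization $P^{n_0}(\psi,\cdot)\geq\delta\nu(\cdot)$ on $\{W\leq R\}$ for a fixed probability $\nu$. The Meyn-Tweedie theorem then delivers a unique invariant probability $\pi$ for $\psi_n$ and the claimed total-variation convergence. Finally, the classical semi-Markov identity
\[
\pi_{\mathrm{proc}}(B)=\frac{1}{E\tau_1}\int_0^\infty\pi(\{\psi:e^{tA}\psi\in B\})\,P(\tau_1>t)\,dt
\]
relates the embedded-chain invariant measure to that of the continuous-time process, and by uniqueness on both sides it identifies the $\pi$ of Theorem~\ref{thConvProcess} with the present one after the normalization $\pi(L)=1$. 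I expect the main obstacle to be the quantitative lower bound $\langle\psi,B\psi\rangle\geq cH(\psi)$: positive definiteness of $B$ is easy, but extracting a uniform $c$ requires a compactness/spectral argument on the unit sphere of $L$, and the $p_1\in K$ contribution must be tracked carefully so that it is absorbed into the additive constant $C$ of the drift rather than spoiling the contraction factor.
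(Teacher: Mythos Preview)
Your drift argument matches the paper's and is correct: both identify the quadratic form $\Delta(\psi)=E_\tau|p_1(e^{\tau A}\psi)|^2$, show it is positive definite via completeness, and then use equivalence with $H$ to extract the contraction factor. Since $L$ is finite-dimensional, positive definiteness of $B$ automatically gives $\langle\psi,B\psi\rangle\geq cH(\psi)$ by equivalence of norms, so the obstacle you flag at the end is not one.

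The genuine gap is in irreducibility. Your rank argument shows that the differential of $G_k$ is surjective at some parameter point, so by the submersion theorem the image of $G_k$ contains an open set; this gives $P^{n_0}(\psi,\cdot)$ an absolutely continuous component supported on \emph{some} open set depending on $\psi$. It does \emph{not} give the claimed ``density strictly positive on $L$'': for that you need $G_k$ to be globally surjective, i.e.\ controllability from any $\psi$ to any $\psi'$. The paper supplies this separately as its ``Strong controllability Theorem'', and the proof is global rather than infinitesimal: it uses item~2 of Condition~2 (the image $J(\mathcal O_\xi,p)$ contains the sphere of radius $|p|$ and meets every modulus) to realize the velocity flip $p_{1,1}\mapsto-p_{1,1}$ and to jump between energy levels, and then invokes a result from~\cite{LM_6} that velocity flips together with the Hamiltonian flow act transitively on each constant-energy manifold when $V\in\mathbf{H}^{++}$. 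Only after surjectivity is established does a Sard/Ponomarev argument upgrade it to equivalence of $P^m(\psi,\cdot)$ with Lebesgue measure. Your differential computation corresponds to the second half of this program; the first half---global reachability---is missing and is where the specific structure of Condition~2(2) is actually used. A smaller issue: your small-set step assumes the transition density is jointly continuous in $(\psi,\psi')$, which you have not verified and which is not automatic for pushforwards under smooth maps; the paper avoids this by checking the weak Feller property directly (a Fatou argument) and then citing the Meyn--Tweedie result that for an irreducible weak-Feller chain every compact set is petite.
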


\paragraph{Properties of the invariant measure}

Here we give some results concerning invariant measure $\pi$ in case
 1-1. Assume moreover that $u_{m}$ have finite second moment $Eu_{m}^{2}<\infty$.

Assume also that the intervals $\tau_{m}$ between collisions have
exponential distribution with parameter $\lambda>0$. In this case
$\psi(t)$ is time homogeneous Markov process. Denote 
\[
C(t)=E_{\psi}\{(\psi(t)-E\psi(t))(\psi(t)-E\psi(t))^{T}\}
\]
its covariance matrix.

\begin{theorem} \label{limitMeasurePropertyTh} The following propositions
holds: 
\begin{enumerate}
\item The measure $\pi$ is Gibbs measure, that is, it has density 
\begin{align*}
  p_{\beta}(\psi)&=\frac{1}{Z}\exp(-\beta H(\psi),\\
  Z&=\int_{L}\exp(-\beta H(\psi)d\lambda(\psi)
\end{align*}
for some $\beta>0$ iff $u_{m}$ have gaussian distribution with mean
zero and some variance $\sigma^{2}$. In the gaussian case we have
\begin{equation}
\beta=\frac{(1+\alpha)}{M(1-\alpha)\sigma^{2}}=\frac{1}{m\sigma^{2}},\label{betaDef}
\end{equation}
where $m$ is the mass of external particles.
\item However, if $Eu_{m}=0$ but $u_{m}$ is not gaussian, then for
any $\psi(0)$ the limiting covariance matrix is the same as above,
i.e. 
\[
\lim_{t\rightarrow\infty}C(t)=\beta^{-1}\left(\begin{array}{cc}
V^{-1} & 0\\
0 & ME
\end{array}\right),
\]
where $\beta$ is defined in (\ref{betaDef}). 
\end{enumerate}
\end{theorem}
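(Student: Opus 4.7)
The plan is to treat the two parts separately. For part (1), the key observation is that the Gibbs measure with density $\propto \exp(-\beta H)$ is automatically invariant under the deterministic Hamiltonian flow between collisions, so invariance under the whole Markov process reduces to invariance of $\pi_\beta$ under the one-step jump kernel $\psi\mapsto J_L(\xi,\psi)$. Since $J_L$ modifies only the $p_1$-coordinate, and since under Gibbs the marginal of $p_1$ is $N(0, M/\beta)$ and is independent of all other coordinates (the kinetic part of $H$ is a sum, so the $p_k$'s factor), invariance reduces to analyzing the one-dimensional transformation $p_1 \mapsto p_1' = \alpha p_1 + (1-\alpha)M u$, with $u$ independent of $p_1$.

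For the ``if'' direction: if $u\sim N(0,\sigma^2)$, then $p_1'$ is Gaussian with mean zero and variance $\alpha^2 M/\beta + (1-\alpha)^2 M^2\sigma^2$, which equals $M/\beta$ exactly when $\beta = (1+\alpha)/(M(1-\alpha)\sigma^2)$; the reformulation $\beta = 1/(m\sigma^2)$ then follows from $(1-\alpha)/(1+\alpha) = m/M$. For the ``only if'' direction I would invoke \emph{Cramer's decomposition theorem}: if the sum of two independent random variables is Gaussian and one summand is Gaussian, then so is the other. Applied to $\alpha p_1$ (already Gaussian) and the independent summand $(1-\alpha)M u$, whose sum must again be $N(0,M/\beta)$, this forces $u$ to be Gaussian; matching the mean to zero gives $E u = 0$ and matching variances recovers the stated formula.

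For part (2), the strategy is to derive a closed linear ODE for $\Sigma(t) = E[\psi(t)\psi(t)^T]$ (after centering, legitimate since $E u_m = 0$) and to check that $C^*$ is a fixed point. Applying the generator to the quadratic $\psi_i\psi_j$ gives
\[
\dot\Sigma = A\Sigma + \Sigma A^T + \lambda\,\mathcal{J}(\Sigma) + \lambda B,
\]
where $\mathcal{J}$ multiplies the $(p_1,p_1)$ entry of a matrix by $\alpha^2 - 1$, multiplies every other $(p_1,\cdot)$ and $(\cdot,p_1)$ entry by $\alpha-1$, and kills nothing else, while $B$ has $(1-\alpha)^2 M^2\sigma^2$ in the $(p_1,p_1)$ slot and zero elsewhere. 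A direct block calculation using $A^T = \bigl(\begin{smallmatrix} 0 & -V \\ M^{-1}E & 0 \end{smallmatrix}\bigr)$ shows $A C^* + C^* A^T = 0$, and the identity for $\beta$ from part (1) gives $\mathcal{J}(C^*) + B = 0$, so $C^*$ solves the stationary equation.

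The remaining task is to show $\Sigma(t) \to C^*$, i.e. that $\mathcal{L}(D) := AD + DA^T + \lambda\mathcal{J}(D)$ is Hurwitz on symmetric matrices. The part coming from $A$ contributes only imaginary spectrum, while $\mathcal{J}$ is negative semidefinite in the Frobenius inner product: a short computation gives
\[
\operatorname{tr}\bigl(D\,\mathcal{J}(D)\bigr) = (\alpha^2-1)D_{p_1 p_1}^2 + 2(\alpha-1)\sum_{j\neq p_1} D_{p_1 j}^2 \leq 0,
\]
since $0<\alpha<1$. The completeness condition (\ref{completeness}) guarantees that this damping on the $p_1$-coordinate propagates through the Hamiltonian coupling to every mode, excluding any neutral eigenvector of $\mathcal{L}$. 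I expect this propagation step to be the main obstacle, and it can be handled exactly as in \cite{LM_4}. An alternative, avoiding explicit spectral analysis, would be to combine Theorem \ref{thConvChain} with a Lyapunov bound based on $H(\psi)$ to obtain uniform boundedness of $\Sigma(t)$, then use weak compactness plus the already-established fixed-point identification to conclude $\Sigma(t)\to C^*$.
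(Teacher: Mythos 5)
Your Part~1 is sound, and its converse direction takes a genuinely different route from the paper's. The reduction is the same: since $p_{\beta}$ is a function of $H$, $\{H,p_{\beta}\}=0$, so $\mathcal{A}^{*}p_{\beta}=0$ is equivalent to $I^{*}p_{\beta}=0$, i.e.\ to invariance of the $N(0,M/\beta)$ marginal of $p_{1}$ (independent of all other coordinates under Gibbs) under $p_{1}\mapsto\alpha p_{1}+(1-\alpha)Mu$. From there the paper computes the adjoint $I^{*}$ explicitly, arrives at the integral identity (\ref{rhoveq}), and Fourier-transforms it in $p_{1}$, exhibiting the characteristic function of $v$ explicitly as that of a centered Gaussian; you instead quote Cram\'er's decomposition theorem. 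That is correct, and in fact a bigger hammer than necessary: since the Gaussian characteristic function never vanishes, one may simply divide characteristic functions, which is in substance what the paper's computation does. Your ``if'' computation and both expressions for $\beta$ agree with the paper.

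Part~2 is where the genuine gap lies. Your moment ODE and fixed-point check coincide with the paper's equation (\ref{dotCeq}) (your $\mathcal{J}(C)$ is $-(1-\alpha)(\Gamma C+C\Gamma)+(1-\alpha)^{2}\Gamma C\Gamma$), but the decisive lemma --- every solution of the homogeneous equation with non-negative definite initial data tends to zero --- is exactly what you defer, and the sketch you give would fail as written. In the unweighted Frobenius inner product the Hamiltonian part is \emph{not} skew: $\mathrm{tr}\bigl(D(AD+DA^{T})\bigr)=\mathrm{tr}\bigl(D(A+A^{T})D\bigr)$ with $A+A^{T}=\left(\begin{smallmatrix}0 & M^{-1}E-V\\ M^{-1}E-V & 0\end{smallmatrix}\right)\neq0$ in general, so your semidefinite bound on $\mathrm{tr}\bigl(D\,\mathcal{J}(D)\bigr)$ does not combine with the $A$-part into a decreasing norm, and LaSalle cannot be run in that metric (purely imaginary spectrum of the Lyapunov operator is a spectral statement, not an isometry statement). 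The paper's actual device is the \emph{linear} Lyapunov function $F(C)=\mathrm{Tr}(C_{G}^{-1}C)$ on the cone of non-negative matrices: the identity $AC_{G}+C_{G}A^{T}=0$ makes the conservative term vanish exactly, giving $\frac{d}{dt}F=-\lambda(1-\alpha^{2})M^{-1}C_{p_{1}p_{1}}(t)\leqslant0$, and the Barbashin--Krasovskij theorem reduces everything to showing that no nonzero trajectory keeps $C_{p_{1}p_{1}}\equiv0$; for such a trajectory $\dot{C}=A_{D}C+CA_{D}^{T}$ with $A_{D}=A-\lambda(1-\alpha)\Gamma$, and $(e^{tA_{D}^{T}}g,\,C(0)e^{tA_{D}^{T}}g)\equiv0$ forces $C(0)=0$ via the completeness property and lemma~3.1 of \cite{LM_1}. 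This is the ``propagation'' step you flag --- and note it is needed \emph{twice}: your parenthetical ``after centering, legitimate since $Eu_{m}=0$'' presupposes $E_{\psi}\psi(t)\to0$, i.e.\ asymptotic stability of the damped mean equation, which the paper proves first via Theorem~2.1 of \cite{LM_1} and which again rests on completeness, not merely on $Eu_{m}=0$; indeed the centered second moment obeys your autonomous ODE only up to an extra source term $\lambda(1-\alpha)^{2}\bigl(E_{\psi}p_{1}(t)\bigr)^{2}\Gamma$. Your fallback (boundedness of $\Sigma(t)$ plus uniqueness of the fixed point) has the same hole: without the completeness argument nothing excludes neutral, non-convergent behavior on the omega-limit set.
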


\paragraph{Plan of the proof}

First of all we will prove Theorem \ref{thConvChain} about convergence
of the embedded chain. For this we will need some results from the
theory of general Markov processes, which will be formulated just
now. We will show in Appendix how they follow from the known results.
After this, we will prove Theorems \ref{thConvProcess} and \ref{limitMeasurePropertyTh}.

\paragraph{Stability of Markov processes}

Let $X$ be complete locally compact separable metric space and $\mathcal{B}(X)$
-- its Borel $\sigma$-algebra. Also we will consider measures on $X$
which will be always assumed non-negative and countably additive.
Consider Markov chain $\xi_{n},\ n=0,1,2,\ldots$ on $X$ with transition
probabilities $P(x,A),\ x\in X,A\in\mathcal{B}(X)$.

\begin{theorem} \label{chainConvThCommonX} Assume that the following
conditions hold: 
\begin{enumerate}
\item (strong irreducibility property) there exist integer $m\geqslant1$
such that for any $x\in X$ the m-step transition probability $P^{m}(x,\cdot)$
is equivalent to some finite measure $\mu$, having the property that
$\mu(O)>0$ for any open subset $O\subset X$;
\item (weak Feller property) for any open $O\subset X$ the function $P(x,O)$
is lower semi-continuous;
\item (drift condition) there exists compact subset $K\subset X$ and non-negative
measurable function $f(x),x\in X$ (Lyapunov function), which tends
to infinity with the distance from K, such that 
\[
\int f(y)dP(x,dy)-f(x)\leqslant-1,\ \mbox{for all}\ x\in X\setminus K,
\]
\[
\int f(y)dP(x,dy)<\infty\ \mbox{for all}\ x\in K.
\]
\end{enumerate}
Then there exists a unique (up to a multiplicative constant) invariant
measure $\pi$ for $\xi_{n}$ and the following properties hold: 
\begin{enumerate}
\item $\pi$ is finite and absolutely continuous w.r.t.\ measure $\mu$.
We assume further that $\pi(X)=1$.
\item For any $x\in X$ 
\[
\lim_{n\rightarrow\infty}\sup_{A\subset\mathcal{B}(X)}|P^{n}(x,A)-\pi(A)|=0.
\]
\item For any bounded measurable function $h$ on $X$ and any initial condition
$\xi_{0}$ we have: 
\[
\frac{1}{n}\sum_{k=0}^{n}h(\xi_{k})\rightarrow\pi(h)=\int_{X}h(x)\pi(dx),\ \mbox{as}\ n\rightarrow\infty,\ \mbox{a.s.}
\]
\end{enumerate}
\end{theorem}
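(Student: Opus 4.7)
The statement is essentially a repackaging of standard results on general state space Markov chains. My plan is to verify that hypotheses (1)--(3) imply the classical prerequisites for uniform ergodicity ($\psi$-irreducibility, aperiodicity, existence of small sets, Foster--Lyapunov drift) and then to invoke the standard uniform ergodic theorem and the Birkhoff-type ergodic theorem for positive Harris chains.

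First, condition (1) yields both $\psi$-irreducibility and aperiodicity. Indeed, if $\mu(A)>0$ then $P^{m}(x,A)>0$ for every $x\in X$, so $\mu$ is an irreducibility measure. Moreover, by Chapman--Kolmogorov $P^{m+k}(x,\cdot)=\int P^{k}(x,dy)\, P^{m}(y,\cdot)$ is also equivalent to $\mu$ for every $k\geqslant 0$, so the set of return times to any set of positive $\mu$-measure contains every integer $n\geqslant m$; the period is therefore one.

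Next, I would prove that every compact set is small. Fix a compact $K\subset X$. By (1), for each $x\in K$ the measure $P^{m}(x,\cdot)$ admits a density $g_{x}(y)>0$ $\mu$-a.e. The weak Feller property (2) implies that $x\mapsto P^{m}(x,O)$ is lower semi-continuous for each open $O$, and therefore attains a strictly positive minimum on $K$ whenever $\mu(O)>0$. A standard exhaustion argument then converts this into a minorization $P^{m}(x,A)\geqslant \varepsilon\, \nu(A)$ valid for all $x\in K$ and all Borel $A$, where $\nu$ is a non-trivial measure; hence $K$ is small. Applied in particular to the compact set appearing in hypothesis (3), together with the drift inequality for $f$, this places us in the classical Foster--Lyapunov framework.

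Finally, positive Harris recurrence follows from the Foster criterion, and combined with aperiodicity it yields a unique invariant probability measure $\pi$ together with the uniform total variation convergence of item 2; the absolute continuity $\pi\ll\mu$ (item 1) follows since $\mu$ is a maximal irreducibility measure. Item 3 is the strong law of large numbers for positive Harris chains. I expect the only non-routine step to be the extraction of a small-set minorization on compact sets from the \emph{weak} Feller property alone, since weak Feller is strictly weaker than strong Feller; the argument must fully exploit that $P^{m}(x,\cdot)$ is assumed equivalent to, and not merely absolutely continuous with respect to, the common reference measure $\mu$.
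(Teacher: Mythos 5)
Your overall architecture is exactly the paper's: reduce to the Meyn--Tweedie stability theory by checking $\psi$-irreducibility, aperiodicity, petiteness/smallness of compacts, and the Foster--Lyapunov drift, then invoke the ergodic theorem for positive Harris aperiodic chains and the law of large numbers. Your direct Chapman--Kolmogorov argument for aperiodicity (every $P^{m+k}(x,\cdot)$, $k\geqslant0$, is equivalent to $\mu$) is correct and is a nice elementary substitute for the paper's citation of Theorem 5.4.4 of Meyn--Tweedie; likewise $\pi\ll\mu$ follows even more directly from $\pi=\pi P^{m}$ and the equivalence of $P^{m}(x,\cdot)$ with $\mu$ than from maximality of the irreducibility measure.

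However, the step you yourself flag as non-routine is a genuine gap, and as stated it is false. You claim that from $\inf_{x\in K}P^{m}(x,O)>0$ for every open $O$ with $\mu(O)>0$ a ``standard exhaustion argument'' yields a minorization $P^{m}(x,A)\geqslant\varepsilon\nu(A)$ on $K$ \emph{at the same time index} $m$. Counterexample: take $X=K=[0,1]$, $\mu$ Lebesgue, and $P(x,dy)=c_{x}|y-x|\,dy$ with $c_{x}$ the normalizing constant. Then $m=1$ works in hypothesis (1), $P(x,O)$ is continuous in $x$, yet no nontrivial measure $\nu$ satisfies $P(x,A)\geqslant\varepsilon\nu(A)$ for all $x\in[0,1]$, since $\inf_{x}c_{x}|y-x|=0$ for every $y$; a minorization first appears for $P^{2}$. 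Knowing only that each open set receives uniformly positive mass does not produce a \emph{common} minorizing measure: the densities $g_{x}$ may have no set of positive $\mu$-measure on which they are uniformly bounded below. This is precisely the content of the continuous-component (T-chain) machinery, which is how the paper proceeds: strong irreducibility plus the lower-semicontinuity (weak Feller) hypothesis make the chain a T-chain (Meyn--Tweedie, Theorem 6.0.1), whence every compact set is \emph{petite} (with a sampled kernel, not a fixed power $P^{m}$), and petite upgrades to small using aperiodicity (Theorem 5.5.7) if smallness is wanted. With that repair your remaining steps match the paper: the drift condition with petite $K$ gives Harris recurrence (Theorem 9.1.8), Theorem 13.0.1 gives items 1--2, and the Harris-chain law of large numbers gives item 3.
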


\section{Proof of Theorem \ref{thConvChain} }

Consider now the embedded Markov chain 
$
\psi_{k}=\psi(t_{k}),\quad\psi_{0}=\psi(0).
$
For $\psi\in L$ and Borel subset $A\subset L$ let $P(\psi,A)$ be
the transition probability of the Markov chain $\psi_{k}$. Without
loss of generality we assume here $M=1$.

Theorem \ref{thConvChain} immediately follows from theorem
\ref{chainConvThCommonX} and the following lemma:

\begin{lemma}\label{-LF_1} The chain $\psi_{k}$ satisfies all conditions
of theorem \ref{chainConvThCommonX} with any finite measure $\mu$
equivalent to the Lebesgue measure on $L$ and $f(\psi)=H(\psi)$.
\end{lemma}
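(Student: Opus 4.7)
The plan is to check the three hypotheses of Theorem~\ref{chainConvThCommonX} for the one-step transition kernel
$$P(\psi, A) = \int_0^\infty\!\!\int_{\mathcal{O}_\xi} \mathbf{1}_A\bigl(J(\xi,\,e^{\tau A}\psi)\bigr)\,\rho_\tau(\tau)\rho_\xi(\xi)\,d\xi\,d\tau,$$
taking $f = H$ as the Lyapunov function and letting $\mu$ be any finite Lebesgue-equivalent measure on $L$. The weak Feller property is almost immediate: for each $(\tau,\xi)$ the map $\psi \mapsto J(\xi, e^{\tau A}\psi)$ is continuous, so the integrand is lower semicontinuous in $\psi$ for every open set $A = O$, and Fatou's lemma transfers the lsc property to $\psi \mapsto P(\psi, O)$.

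For strong irreducibility I would fix $m$ large relative to $dN$ and show that $P^m(\psi, \cdot)$ has a strictly positive density against Lebesgue. After $m$ iterations the state is a smooth function of $m(l+1)$ random parameters whose joint density is strictly positive on $(\mathbb{R}_+ \times \mathcal{O}_\xi)^m$. The critical input is the completeness property $l_V = \mathbb{R}^{dN}$, which guarantees that successive kicks on $p_1$ interleaved with Hamiltonian evolutions propagate perturbations into every one of the $2dN$ directions of $L$. A direct Jacobian calculation, differentiating the iterated map first in the $\tau$'s and then in the $\xi$'s, should exhibit $2dN$ linearly independent tangent vectors, giving a full-rank differential and hence absolute continuity of the pushforward. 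Positivity of the density everywhere on $L$ then follows using property~2 of Condition~2, which yields enough steering flexibility in $J$ to drive any $\psi$ arbitrarily close to any target point.

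For the drift condition, the Hamiltonian flow $e^{tA}$ conserves $H$, so only the collision contributes:
$$E[H(\psi_1) \mid \psi_0] - H(\psi_0) = \tfrac{1}{2M}\,E_{\tau,\xi}\bigl[|J(\xi, p_1(\tau))|^2 - |p_1(\tau)|^2\bigr],$$
where $p_1(\tau)$ denotes the particle-$1$ momentum block of $e^{\tau A}\psi_0$. By property~3 of Condition~2, the inner expectation is at most $(\alpha-1)|p_1(\tau)|^2$ once $|p_1(\tau)|$ exceeds a threshold, and it is bounded by a constant on the complementary compact region. Combining these yields a uniform estimate
$$E[H(\psi_1) \mid \psi_0] - H(\psi_0) \leq -\tfrac{1-\alpha}{2M}\,E_\tau|p_1(\tau)|^2 + C,$$
so the drift hypothesis reduces to the lower bound $E_\tau|p_1(\tau)|^2 \geq c\,H(\psi_0) - C'$ with positive $c$.

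This quantitative energy-return estimate is where I expect the main obstacle. In normal-mode coordinates, $p_1(\tau)$ is a linear combination of $\cos(\omega_k\tau)$ and $\sin(\omega_k\tau)$, and the completeness $l_V = \mathbb{R}^{dN}$ is precisely the statement that every normal mode projects nontrivially onto the $p_1$-direction. Averaging $|p_1(\tau)|^2$ against the strictly positive density $\rho_\tau$ should then produce a quadratic form in $\psi_0$ comparable to $H(\psi_0)$; the rational-independence refinement $V \in \mathbf{H}^{++}$, available through Proposition~\ref{prop_dissip} (possibly after a perturbation argument in $V$), is what kills resonant cross-terms in the spectral expansion and makes the comparison clean. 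Once this spectral-averaging bound is in place, the drift is $\leq -1$ on the complement of any sufficiently large $H$-sublevel set, $\int H\,dP(\psi, \cdot) < \infty$ on that sublevel set follows from boundedness of the same quadratic form there, and the lemma is proved.
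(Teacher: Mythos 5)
Your weak Feller argument matches the paper's (continuity of $(\tau,\xi,\psi)\mapsto J(\xi;e^{\tau A}\psi)$ in $\psi$ plus Fatou), but the strong irreducibility part has a genuine gap exactly at the sentence ``a direct Jacobian calculation \ldots{} should exhibit $2dN$ linearly independent tangent vectors.'' For a general collision map $J$ satisfying only Condition 2, the derivatives of the $m$-fold composition in the $\tau$'s and $\xi$'s are propagated through the unknown derivatives of $J$, and completeness of $V$ controls only the linear Hamiltonian part between kicks; nothing in Condition 2 lets you verify full rank at a prescribed point by direct computation. The paper proceeds in the opposite order: it first proves \emph{surjectivity} of the iterated map $G^{\psi}$ onto all of $L$ (Theorem \ref{overallL}), using Condition 2(2) to realize a velocity flip of the first momentum coordinate and then invoking the controllability theorem of \cite{LM_6} for flip dynamics (plus a speed-rescaling step to change energy level), and only then deduces the rank statement indirectly: since $G^{\psi}$ is analytic and onto, Sard's theorem forces the existence of a noncritical point, analyticity makes the critical set Lebesgue-null, and Ponomarev's theorem \cite{Ponomarev} (Lemma \ref{zeroSet-1}) yields the two-sided null-set correspondence $P^{m}(\psi,B)=0\Leftrightarrow\lambda(B)=0$, i.e.\ equivalence of $P^{m}(\psi,\cdot)$ with Lebesgue measure. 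Note also that you aim at an everywhere \emph{positive density} for $P^{m}(\psi,\cdot)$, which is both harder and more than the hypothesis of Theorem \ref{chainConvThCommonX} requires; measure equivalence suffices. Without the controllability theorem or a substitute for it, your irreducibility argument is an announcement rather than a proof.

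In the drift part your reduction to a lower bound $E_{\tau}|p_{1}(\tau)|^{2}\geqslant c\,H(\psi_{0})$ is the same as the paper's, but you misidentify the mechanism that closes it. Rational independence of the frequencies ($V\in\mathbf{H}^{++}$) is not needed, and the proposed ``perturbation argument in $V$'' is illegitimate: $V$ is fixed data of the model, the transition kernel changes with $V$, and the lemma must hold for the given $V$ under completeness alone. The paper's argument is short and exact: $\Delta(\psi)=E_{\tau}|p_{1}^{0}(\tau)|^{2}$ is a quadratic form in $\psi$ (as $p_{1}^{0}(\tau)$ is linear in $\psi$, with $\Delta\leqslant 2H$ giving finiteness); if $\Delta(\psi)=0$ then, since $\rho_{\tau}>0$ on all of $\mathbb{R}_{+}$ and $p_{1}^{0}$ is continuous, $p_{1}^{0}(\tau)=0$ for all $\tau\geqslant0$, which by Proposition \ref{prop61} (item 4) means $\psi\in L_{0}$, and completeness gives $L_{0}=\{0\}$. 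Hence $\Delta$ is positive definite, and in finite dimension any two positive definite forms are comparable, so $\Delta\geqslant rH$ with no additive constant, no spectral expansion, and no resonance analysis --- the ``main obstacle'' you flag dissolves. With that substitution the rest of your drift computation agrees with the paper's Lemma \ref{LF_2}, including the constant $\sup_{p\in K}E|J(\xi;p)|^{2}$ on the exceptional compact set, the choice of $K'$ as a large $H$-sublevel set, and the finiteness $\int H\,dP\leqslant H+\mathrm{const}$ needed on $K'$.
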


Now we will prove this lemma.

\subsection{Proof of weak Feller property}

Let $O$ be any fixed open subset of $L$. For any $\psi\in L$ denote
$\mathbf{1}_{\psi}(\tau,y),\ \tau\in{\bf R_{+}},\ y\in\mathbb{{\bf R}}^{l}$
the indicator function on ${\bf R_{+}}\times\mathbb{{\bf R}}^{l}$,
that is $\mathbf{1}_{\psi}(\tau,y)=1$ if $J(\tau,y;\psi)\in O$,
and zero otherwise. Then we have 
\[
P(\psi,O)=\int_{\mathbb{{\bf R}}_{+}\times\mathbb{{\bf R}}^{l}}\mathbf{1}_{\psi}(s,y)\rho_{\tau}(s)\rho_{\xi}(y)\ ds\ dy.
\]
Let $\psi_{n}\rightarrow\psi,\psi_{n}\in L$ as $n\rightarrow\infty$.
Fix $s\geqslant0$ and $y\in\mathbb{{\bf R}}^{l}$ and consider two
cases:

1. $J(s,y;\psi)\in O$, then, as $O$ is open and $J(s,y;\psi)$ is
continuous in $\psi$, starting from some $n$ the inclusion $J(s,y;\psi_{n})\in O$
holds. That is why 
\[
\lim_{n\rightarrow\infty}\mathbf{1}_{\psi_{n}}(s,y)=\mathbf{1}_{\psi}(s,y)=1.
\]

2. $J(s,y;\psi)\notin O$. Then 
\[
\liminf_{n}\mathbf{1}_{\psi_{n}}(s,y)\geqslant\mathbf{1}_{\psi}(s,y)=0.
\]
Thus for any $s\geqslant0$ and $y\in\mathbb{{\bf R}}^{l}$ 
\[
\liminf_{n}\mathbf{1}_{\psi_{n}}(s,y)\geqslant\mathbf{1}_{\psi}(s,y).
\]
Then by Fatou's lemma 
\[
\liminf_{n}P(\psi_{n},O)\geqslant P(\psi,O).
\]
Thus the chain satisfies the weak Feller property.

\subsection{Proof of the drift condition}

The drift condition for the chain $\psi_{k}$ of Theorem \ref{chainConvThCommonX}
immediately follows from the following lemma (stronger than Lemma
\ref{-LF_1}).

\begin{lemma} \label{LF_2} There exists compact set $K'\subset L$,
such that for all $\psi\notin K'$ the following inequality holds:
\[
E\{H(\psi_{1})|\psi_{0}=\psi\}-H(\psi)<-rH(\psi),
\]
for some positive constant $r>0$, not depending on $\psi$.
\end{lemma}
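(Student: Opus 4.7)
}

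The plan is to exploit the fact that $H$ is a first integral of the Hamiltonian flow $e^{sA}$, so $H$ is unchanged between collisions and only the jump at time $\tau_{1}$ contributes to the increment. Since the jump acts only on the first particle's momentum, writing $p_{1}(\tau_{1}-)$ for the $p_{1}$-component of $e^{\tau_{1}A}\psi$, I would start from the identity
\[
H(\psi_{1})-H(\psi)\;=\;\frac{|J(\xi_{1},p_{1}(\tau_{1}-))|^{2}-|p_{1}(\tau_{1}-)|^{2}}{2M}.
\]
Taking the conditional expectation and splitting according to whether $|p_{1}(\tau_{1}-)|$ exceeds a radius $R$ chosen from Condition~2, property~3, one obtains
\[
E[H(\psi_{1})-H(\psi)\mid\psi_{0}=\psi]\;\le\;-\frac{1-\alpha}{2M}\,E\!\left[|p_{1}(\tau_{1}-)|^{2}\mathbf{1}_{|p_{1}(\tau_{1}-)|>R}\right]+C_{R},
\]
with $C_{R}$ independent of $\psi$ since on $\{|p_{1}(\tau_{1}-)|\le R\}$ both $|p_{1}(\tau_{1}-)|^{2}$ and the conditional expectation $E_{\xi}|J(\xi,p_{1}(\tau_{1}-))|^{2}$ are uniformly controlled.

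The heart of the argument is to compare the averaged kinetic energy of particle~$1$ with the total Hamiltonian. Consider the quadratic form
\[
Q(\psi)\;=\;E|p_{1}(\tau_{1}-)|^{2}\;=\;\psi^{T}\!\!\left(\int_{0}^{\infty}\rho_{\tau}(s)\,(e^{sA})^{T}P_{1}^{T}P_{1}\,e^{sA}\,ds\right)\!\psi,
\]
where $P_{1}:L\to{\bf R}^{d}$ projects onto the $p_{1}$-coordinates. I would show that $Q$ is strictly positive definite. If $Q(\psi)=0$, then positivity of $\rho_{\tau}$ on ${\bf R}_{+}$ forces $P_{1}e^{sA}\psi=0$ for a.e.\ $s\ge 0$, and real-analyticity of $s\mapsto P_{1}e^{sA}\psi$ extends this to all $s\in{\bf R}$. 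Differentiating at the origin gives $P_{1}A^{k}\psi=0$ for every $k\ge 0$. A block computation of $A^{k}$ translates this into $e_{1}^{T}V^{k}p=0$ for $k\ge 0$ and $e_{1}^{T}V^{k}q=0$ for $k\ge 1$; by symmetry of $V$ it says $\langle V^{k}e_{1},p\rangle=0$ and $\langle V^{k}e_{1},q\rangle=0$ for these $k$. The completeness property~(\ref{completeness}) states that $\{V^{k}e_{1}\}_{k\ge 0}$ spans ${\bf R}^{dN}$, and invertibility of $V$ implies the same for $\{V^{k}e_{1}\}_{k\ge 1}$. Hence $p=q=0$ and $Q$ is strictly positive definite. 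Because $H$ is another positive definite quadratic form on the finite-dimensional space $L$, there is $c>0$ with $Q(\psi)\ge c\,H(\psi)$ for every $\psi$.

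The drift inequality now follows by elementary bookkeeping. From $|p_{1}(\tau_{1}-)|^{2}\mathbf{1}_{|p_{1}(\tau_{1}-)|\le R}\le R^{2}$ one has
\[
E\!\left[|p_{1}(\tau_{1}-)|^{2}\mathbf{1}_{|p_{1}(\tau_{1}-)|>R}\right]\;\ge\;Q(\psi)-R^{2}\;\ge\;c\,H(\psi)-R^{2},
\]
which is at least $c\,H(\psi)/2$ as soon as $H(\psi)\ge 2R^{2}/c$. Inserting this into the earlier estimate and choosing $H_{0}$ large enough to absorb $C_{R}$ produces the required inequality with, say, $r=(1-\alpha)c/(8M)$ outside the compact sublevel set $K'=\{\psi\in L:H(\psi)\le H_{0}\}$. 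The main obstacle is the strict positive-definiteness of $Q$; this is where the completeness of $V$ really enters, encoding the fact that particle~$1$ is dynamically coupled to the rest of the system so that its kinetic energy cannot remain identically zero along the Hamiltonian flow while the system still carries energy. Once this observability-type statement is in place, the remainder of the proof is routine.
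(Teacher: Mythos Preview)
Your proof is correct and follows essentially the same strategy as the paper: both use conservation of $H$ under the flow to reduce the increment to the jump term, both invoke Condition~2 item~3 to bound the jump, and both introduce the quadratic form $Q(\psi)=E|p_{1}(\tau_{1}-)|^{2}$ (the paper's $\Delta(\psi)$) and show it is positive definite via completeness, then compare it to $H$. The only cosmetic differences are that you spell out directly the observability argument (which the paper outsources to Proposition~\ref{prop61}) and that you split with an indicator on $\{|p_{1}(\tau_{1}-)|>R\}$ where the paper uses the uniform bound $E_{\xi}|J(\xi,p)|^{2}\le \sup_{p\in K}E_{\xi}|J(\xi,p)|^{2}+\alpha|p|^{2}$; neither difference is substantive.
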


\begin{proof} 
It will be convenient to use the following notation: 
\begin{align*}
\psi&= (q,p)^{T},\quad q=(q_{1},\ldots,q_{N})^{T},\quad p=(p_{1},\ldots,p_{N})^{T},\\
\psi^{0}(t)&=  e^{tA}\psi=(q^{0}(t),p^{0}(t))^{T},\\
 q^{0}(t)&=(q_{1}^{0}(t),\ldots,q_{N}^{0}(t))^{T}, \quad p^0=(p_{1}^{0}(t),\ldots, p_{N}^{0}(t))^{T},\ t\geqslant 0.
\end{align*}
Then the point $\psi_{1}$ can be written as 
\[
\psi_{1}=J(\tau,\xi;\psi),
\]
where $\xi=\xi_{1},\ \tau=\tau_{1}$ are random variables defined
above in the definition of $\psi(t)$. For the energy we have: 
\[
H(\psi_{1})=H(\psi^{0}(\tau))+\frac{1}{2}\left(|J(\xi;p_{1}^{0}(\tau))|^{2}-|p_{1}^{0}(\tau)|^{2}\right)=
\]
\[
=H(\psi)+\frac{1}{2}\left(|J(\xi;p_{1}^{0}(\tau))|^{2}-|p_{1}^{0}(\tau)|^{2}\right).
\]
And for the mean energy: 
\[
E\{H(\psi_{1})|\psi_{0}=\psi\}=H(\psi)+\frac{1}{2}(E\{|J(\xi;p_{1}^{0}(\tau))|^{2}\}-E\{|p_{1}^{0}(\tau)|^{2}\}.
\]
By Condition 2 (item 3) on $J$ we have the inequality: 
\[
E\{|J(\xi;p_{1}^{0}(\tau))|^{2}\}\leqslant M+\alpha E\{|p_{1}^{0}(\tau)|^{2}\},
\]
where 
\[
M=\sup_{p\in K}E\{|J(\xi;p)|^{2}\}
\]
and the compact set $K$ was defined in item 3 of Condition 2 on $J$.

Again 
\[
E\{H(\psi_{1})|\psi_{0}=\psi\}\leqslant H(\psi)+\frac{M}{2}-c\Delta(\psi),\quad c=\frac{1-\alpha}{2}>0,
\]
where we used the notation 
\[
\Delta(\psi)=E\{|p_{1}^{0}(\tau)|^{2}\}.
\]
From the definition of $\Delta$ we conclude that it is a non negative
definite quadratic form of $\psi$. On the other hand as the completeness
condition holds and due to the Proposition \ref{prop61} (see below) we have $\Delta(\psi)=0$
iff $\psi=0$. So form $\Delta(\psi)$ is positive definite. As all
positive definite forms define equivalent norms, there exists constant
$r>0$ such that for all $\psi\in L$ the following inequality holds:
\[
\Delta(\psi)>rH(\psi).
\]
It is clear that for some compact subset $K'\subset L$ and for all
$\psi\notin K'$ we have 
\[
M<rcH(\psi).
\]
That is why, for all $\psi\notin K'$ : 
\[
E\{H(\psi_{1})|\psi_{0}=\psi\}-H(\psi)\leqslant-\frac{rc}{2}H(\psi).
\]
Thus the lemma is proved.
\end{proof}

\subsection{Proof of the strong irreducibility property}

For $\psi\in L$, any $m=1,2,\ldots$ and arbitrary $t_{1},\ldots,t_{m}\geqslant0,\ u_{1},\ldots,u_{m}\in\mathcal{O}_{\xi},\ \psi\in L$
define the map $J_{m}$: 
\[
J_{0}=J_{0}(\psi)=\psi,\quad J_{1}(t_{1},u_{1};\psi)=J(t_{1},u_{1};\psi),
\]
\[
\quad J_{2}(t_{1},u_{1},t_{2},u_{2};\psi)=J(t_{2},u_{2};J_{1}(t_{1},u_{1};\psi)),
\]
\[
J_{m}(t_{1},u_{1},\ldots,t_{m},u_{m};\psi)=J(t_{m},u_{m};J_{m-1}(t_{1},u_{1},\ldots,t_{m-1},u_{m-1};\psi)).
\]
Also for any point $\psi\in L$ and any integer $m\geqslant1$ define
the subset: 
\[
\mathcal{J}_{m}(\psi)=\{J_{m}(t_{1},u_{1},\ldots,t_{m},u_{m};\psi):\ t_{1},\ldots,t_{m}\geqslant0,\ u_{1},\ldots,u_{m}\in\mathcal{O}_{\xi}\}\subset L.
\]

Now we prove the following ``Strong controllability Theorem''.

\begin{theorem} \label{overallL} There exists $m\geqslant1$ such
that for any $\psi\in L$ 
\[
\mathcal{J}_{m}(\psi)=L.
\]
\end{theorem}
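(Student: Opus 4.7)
The plan is to recognize the iterated map $J_m$ as a discrete-time linear control system and verify the Kalman rank condition via the completeness property (\ref{completeness}). First I would observe that at each jump step, $J(\cdot, p_1)$ realizes any post-jump momentum in $\R^d$. For Conditions 1-1 and 1-d this is immediate from the explicit forms $J(u, p_1) = \alpha p_1 + (1-\alpha) u$ and $J(u, p_1) = R p_1 + u$. For the abstract Condition 2, Property 2 combines sphere-covering $\{|p'| = |p|\} \subset J(\mathcal{O}_\xi, p)$ with free magnitude selection $|J(g, p)| = h$, which together with the connectedness of $\mathcal{O}_\xi$ yield surjectivity of $J(\mathcal{O}_\xi, p)$ onto $\R^d$ for each $p$ (and in particular the existence of an ``identity'' jump, since $p \in J(\mathcal{O}_\xi, p)$). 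Consequently, $\mathcal{J}_m(\psi)$ coincides with the reachable set of the discrete-time linear system on $L$ with free flow $e^{\tau_i A}$ between steps and impulse control freely resetting $p_1$ at each step.

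For such a linear control system, the reachable set in $m$ steps from any state $\psi$ is an affine subspace of $L$ whose linear part contains the Kalman span $\mathrm{span}\{A^k B \R^d : k \geq 0\}$, where $B:\R^d \to L$ is the injection $p \mapsto (0, p, 0, \ldots, 0)$ into the $p_1$-coordinate. A direct calculation with $A = \left(\begin{smallmatrix} 0 & E/M \\ -V & 0 \end{smallmatrix}\right)$ shows that iterates $A^k B e_1^{(d)}$ produce, up to powers of $M$, the vectors $(V^r e_1, 0)$ and $(0, V^r e_1)$ for $r \geq 0$, where $e_1$ is the first standard basis vector of $\R^{dN}$. The completeness condition (\ref{completeness}) precisely asserts that $\{V^r e_1\}_{r \geq 0}$ spans $\R^{dN}$, whence these iterates span all of $L$. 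Thus the Kalman span equals $L$, and for $m \geq 2dN$ the reachable set from $\psi$ in $m$ steps is all of $L$.

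The main technical obstacle is the positive-time constraint $\tau_i > 0$, which is absent in the textbook Kalman theorem (where one typically uses arbitrary real times or a single fixed sampling time). I would handle this by the analyticity of $\tau \mapsto e^{\tau A}$: the set of positive $(\tau_1, \ldots, \tau_m)$'s realizing a maximal-rank controllability matrix is open and dense, so a generic choice suffices. A secondary technicality is the possibility that $\partial J / \partial u$ degenerates at specific $u$, which would obstruct the inverse-function-theorem step used to pass from the linearized surjectivity to the actual nonlinear surjectivity; I would address this by working at generic base points where $\partial J / \partial u$ has rank $d$ (such points exist by Property 1 of Condition 2, which gives analyticity almost everywhere). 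With these two technicalities handled, the conclusion $\mathcal{J}_m(\psi) = L$ follows for $m$ of order $2dN$, uniformly in $\psi$.
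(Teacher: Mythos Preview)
Your approach via Kalman controllability is genuinely different from the paper's and cleaner in spirit, but it has a real gap in the treatment of the abstract Condition~2.

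The paper does \emph{not} claim or use that $J(\mathcal{O}_\xi,p)=\R^d$; it only uses the two properties actually stated in item~2 of Condition~2: the image contains the sphere $\{|p'|=|p|\}$, and it meets every sphere $\{|p'|=h\}$. Your deduction of full surjectivity from ``sphere $+$ arbitrary magnitude $+$ connectedness'' is not valid: a connected set can contain a full sphere and a point of every norm without being all of $\R^d$ (a circle joined to a ray in $\R^2$, say). Worse, nothing in Condition~2 forbids $l<d$, in which case the image of the $l$-dimensional open set $\mathcal O_\xi$ under a smooth map cannot fill $\R^d$ at all. (Nor does the paper assume $\mathcal O_\xi$ connected.) So the step ``$p_1$ can be reset to an arbitrary value after each collision'' fails under the abstract hypothesis, and with it your reduction to a linear impulse-control system.

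For the concrete Conditions 1-1 and 1-$d$ your reduction is correct, and once a free $p_1$-reset is available the Kalman argument you sketch goes through: with $B$ the injection into the $p_1$-slot, the vectors $A^kBe_1$ span $L$ precisely by completeness~(\ref{completeness}), and analyticity in the times handles the positive-sampling issue. This is more self-contained than the paper's route. The paper instead exploits only the sphere-containment part of Condition~2 to realize the single velocity flip $I$ (sign change of $p_{1,1}$), then invokes Theorem~1 of \cite{LM_6} to conclude that flips interleaved with the Hamiltonian flow reach the whole energy surface $\mathcal{M}_{H(\psi)}$; one further step, using only the ``arbitrary magnitude'' clause, moves between energy surfaces. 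That two-stage decomposition is less direct but works under the weaker hypothesis actually assumed.

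A minor remark: once a free $p_1$-reset is granted, the map $(\tau_1,u_1,\ldots,\tau_m,u_m)\mapsto J_m(\cdots;\psi)$ with fixed generic times is already affine and onto $L$, so your ``secondary technicality'' about degeneracy of $\partial J/\partial u$ is not needed for this theorem. Rank considerations enter only later, in the measure-equivalence Lemma~\ref{zeroSet-1}.
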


\begin{proof} Due to condition 2 on the transformation $J$, for all $\psi\in L$
and all $t\geqslant0$ there exists $u=u(t,\psi)\in\mathcal{O}_{\xi}$
such that 
\[
J(t,u;\psi)=J(u;e^{tA}\psi)=Ie^{tA}\psi,
\]
where the linear transformation $I$ is a velocity flip of the first
coordinate of particle 1, i.e.\ if $\psi=(q,p)\in L,q=(q_{1},\ldots,q_{N})^{T},\ p=(p_{1},\ldots,p_{N})^{T}$,
then 
\[
I\psi=\psi'=(q',p'),\ q'=q,\ p'_{1,1}=-p_{1,1},
\]
\[
\ p'_{1,k}=p_{1,k},\ k=2,\ldots,d,\ p'_{j}=p_{j},\ j=2,\ldots,N.
\]
Thus, due to Theorem 1 of \cite{LM_6} there exists $m$ such that
for all $\psi$ 
\[
\mathcal{M}_{h}\subset\mathcal{J}_{m}(\psi),
\]
where $h=H(\psi)$ and 
\[
\mathcal{M}_{h}=\{\psi\in L:H(\psi)=h\}
\]
is the constant energy manifold.

Consider now any point $\psi'\in L,\ h'=H(\psi')$. Let us denote
$g=(0,p)\in L,\ p=(p_{1},\ldots,p_{N})^{T}$, where 
\[
p_{1,1}=1,\ p_{1,k}=0,\ k=2,\ldots,d,\ p_{j}=0,\ j=2,\ldots,N.
\]
It is obvious that $\sqrt{2h}g\in\mathcal{M}_{h}$, and therefore
there exist $t_{1},u_{1},\ldots,t_{m},u_{m}$ such that 
\[
J_{m}(t_{1},u_{1},\ldots,t_{m},u_{m};\psi)=\sqrt{2h}g\in\mathcal{M}_{h}.
\]
Due to the second part of condition 2 on the transformation $J$ we
can find $u_{m+1},t_{m+1}=0$ such that 
\[
J_{m+1}(t_{1},u_{1},\ldots,t_{m},u_{m},t_{m+1},u_{m+1};\psi)=J(0,u_{m+1};\sqrt{2h}g)=\tilde{\psi},
\]
where $\tilde{\psi}=(0,\tilde{p}),\ \tilde{p}=(\tilde{p}_{1},\ldots,\tilde{p}_{N})^{T}$
and 
\[
\tilde{p}_{k}=0,\ k=2,\ldots,N,\ |\tilde{p}_{1}|=\sqrt{2h'},
\]
and so $H(\tilde{\psi})=h'$. Similar arguments show the existence
of $t'_{1},u'_{1},\ldots,t'_{m},u'_{m}$ such that 
\[
J_{m}(t'_{1},u'_{1},\ldots,t'_{m},u'_{m};\tilde{\psi})=\psi'.
\]
Finally we get 
\[
J_{2m+1}(t_{1},u_{1},\ldots,t_{m},u_{m},t_{m+1},u_{m+1},t'_{1},u'_{1},\ldots,t'_{m},u'_{m};\psi)=\psi'.
\]
So, the assertion is proved.
\end{proof}

Let us come back to the proof of the strong irreducibility property.
Fix the number $m$ from Theorem \ref{overallL} and consider the
following set: 
\[
\mathcal{U}=\{(u_{1},t_{1},\ldots,u_{m},t_{m}):\ t_{i}\geqslant0,\ u_{i}\in\mathcal{O}_{\xi}\ \mbox{for all}\ i=1,\ldots,m\}=(\mathcal{O}_{\xi}\times\mathbb{R}_{\geqslant0})^{m}.
\]
When $\psi\in L$ is fixed, we can consider the map $J_{m}(u_{1},t_{1},\ldots,u_{m},t_{m};\psi)$
as a map from $\mathcal{U}$ to $L$. We will denote this map 
\[
G^{\psi}:\mathcal{U}\rightarrow L,\quad\mathbf{u}=(u_{1},t_{1},\ldots,u_{m},t_{m})\mapsto J_{m}(u_{1},t_{1},\ldots,u_{m},t_{m};\psi).
\]
Theorem \ref{overallL} states that the image of the map $G^{\psi}$
coincides with $L$ for all $\psi\in L$. Denote $\lambda$ and $\mu$
Lebesgue measures on $\mathcal{U}$ and $L$ accordingly.

\begin{lemma} \label{zeroSet-1} For any measurable $B\subset L$
its Lebesgue measure $\mu(B)=0$ iff the Lebesgue measure $\lambda$
of the set $(G^{\psi})^{-1}(B)$ in $\mathcal{U}$ is zero.
\end{lemma}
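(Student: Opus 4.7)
The plan is to decompose $\mathcal{U}$ into a regular set $R$ on which $G^\psi$ is a submersion and a critical set $C$ where the Jacobian $dG^\psi$ has rank strictly less than $2dN$. On $R$ the implicit function theorem will furnish a countable cover by open charts $U_i\subset R$ together with diffeomorphisms $U_i\cong V_i\times W_i$, with $V_i\subset L$ and $W_i\subset\mathbb{R}^{\dim\mathcal{U}-2dN}$ open and nonempty, conjugating $G^\psi|_{U_i}$ to the first projection $\pi_1:V_i\times W_i\to V_i$. Sard's theorem applied to the analytic part of $G^\psi$ will give $\mu(G^\psi(C))=0$. With these tools in place, the lemma reduces to the key claim $\lambda(C)=0$.

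To prove $\lambda(C)=0$, observe first that by Condition~2, item~1, the map $J$ is analytic on an open subset of full Lebesgue measure in $\mathcal{O}_\xi\times\mathbb{R}^d$, so $G^\psi$ is analytic on an open subset $\mathcal{U}_0\subset\mathcal{U}$ of full measure, and the $2dN\times 2dN$ minors of $dG^\psi$ are real-analytic on $\mathcal{U}_0$. On each connected component $W$ of $\mathcal{U}_0$ I would argue by dichotomy: either some maximal minor is not identically zero on $W$, in which case the common vanishing locus of all maximal minors has Lebesgue measure zero in $W$ (since the zero set of a nonzero real-analytic function is null); or all maximal minors vanish identically on $W$, in which case $G^\psi$ has rank strictly less than $2dN$ throughout $W$, and the constant-rank theorem implies that $G^\psi(W)$ is a countable union of submanifolds of $L$ of dimension less than $2dN$, hence $\mu(G^\psi(W))=0$. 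The second alternative cannot hold on every component of $\mathcal{U}_0$: otherwise $\mu(G^\psi(\mathcal{U}_0))=0$, whereas the null set $\mathcal{U}\setminus\mathcal{U}_0$ has $G^\psi$-image of $\mu$-measure zero (using enough local regularity of $G^\psi$ to send null sets to null sets on compacta), contradicting $G^\psi(\mathcal{U})=L$ from Theorem~\ref{overallL}. Summing the null contributions from each component then yields $\lambda(C)=0$.

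Once $\lambda(C)=0$ is established, both implications are straightforward. If $\mu(B)=0$, then on each chart $(G^\psi)^{-1}(B)\cap U_i$ corresponds via the diffeomorphism to $(B\cap V_i)\times W_i$, which is $\lambda$-null by Fubini; together with $\lambda(C)=0$ this gives $\lambda((G^\psi)^{-1}(B))=0$. Conversely, if $\lambda((G^\psi)^{-1}(B))=0$, Fubini on each chart forces $\mu(B\cap V_i)=0$, hence $\mu(B\cap G^\psi(R))=0$; since $L\setminus G^\psi(R)\subset G^\psi(C)$ is $\mu$-null by Sard and $G^\psi(\mathcal{U})=L$, we conclude $\mu(B)=0$. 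The main obstacle will be the dichotomy step: ruling out globally degenerate rank on a component of $\mathcal{U}_0$ combines the surjectivity of Theorem~\ref{overallL} with the nullity of $G^\psi(\mathcal{U}\setminus\mathcal{U}_0)$, and the latter requires enough regularity of $J$ at its non-analytic points to preclude pathological null-to-positive-measure behavior — a point where the abstract Condition~2 needs to be supplemented by the mild Lipschitz-type regularity that the physical examples (Conditions~1-1 and~1-$d$) automatically provide.
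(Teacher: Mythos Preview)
Your approach coincides with the paper's in outline: show that the critical set of $G^\psi$ has $\lambda$-measure zero by combining analyticity with the surjectivity furnished by Theorem~\ref{overallL}, then exploit the submersion structure on the regular set to transfer nullity between $\mathcal{U}$ and $L$. The paper compresses your chart-plus-Fubini argument for the forward direction into a single citation of Ponomarev's theorem on preimages of null sets under submersions, and for the converse it invokes the Lebesgue density theorem to locate a regular value inside $B$ rather than covering $L$ by chart images --- but the skeleton is the same.

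There is, however, a genuine gap in your dichotomy step. You show only that the second alternative (all maximal minors vanishing identically) cannot hold on \emph{every} connected component of $\mathcal{U}_0$; yet if it holds on even a single component $W$, then $W\subset C$ with $\lambda(W)>0$, and your concluding sentence ``summing the null contributions from each component then yields $\lambda(C)=0$'' does not follow. What is actually required is a regular point in \emph{each} component of $\mathcal{U}_0$, or else an argument that $\mathcal{U}_0$ is connected. The paper sidesteps this by treating $G^\psi$ as analytic throughout a connected $\mathcal{U}$, so that the single regular point guaranteed by Sard-plus-surjectivity makes at least one maximal minor a nonvanishing analytic function on the whole domain, whose zero locus is then $\lambda$-null. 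Your care about the non-analytic locus is more honest than the paper's exposition on this point, but the component-wise conclusion needs repair --- either by establishing connectedness of the analytic locus, or by upgrading the surjectivity argument so that it produces a regular point in every component.
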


The proof of this lemma is the same as the proof of lemma 7 in
\cite{LM_6}. But for reader's convenience we will give the proof
here in our notation.

\begin{proof} 1) Assume that for some $B\subset L$ we have $\mu(B)=0$.
Let us show that $\lambda((G^{\psi})^{-1}(B))=0$. Let $A_{cr}$ be
the set of critical points of the map $G^{\psi}$ (that is points
$\mathbf{u}=(u_{1},t_{1},\ldots,u_{m},t_{m})\in\mathcal{U}$ where the
rank of the Jacobin is not maximal) and let $E=G^{\psi}(A_{cr})\subset L$
be the set of critical values of $G^{\psi}$. By Sard's theorem $\mu(E)=0$.
But as $G^{\psi}(\mathcal{U})=L$, then there exists a non-critical
point $\mathbf{u}\in\mathcal{U}$, that is such that the rank of $dG^{\psi}$
at this point equals $2N$. As the map $G^{\psi}$ is analytic in
the variables $u_{1},t_{1},\ldots,u_{m},t_{m}$, the set of points
$A_{cr}$, where the rank is less than $2N$, has Lebesgue measure
zero. Then the equality $\lambda((G^{\psi})^{-1}(B))0$ follows from
the Ponomarev's theorem (theorem 1 of \cite{Ponomarev}).

2) Assume that for some $B\subset L$ we have $\mu(B)>0$, and let
us show that $\lambda((G^{\psi})^{-1}(B))>0$. By Lebesgue differentiation
theorem there exists point $\psi'\in L\setminus E$ and its neighborhood
$O(\psi')$ such that $\mu(O(\psi')\cap B)>0$. Then there is point
$\mathbf{u}=\mathbf{u}(\psi')\in(G^{\psi})^{-1}(\psi')$ and some its
neighborhood $O(\mathbf{u})\subset\mathcal{U}$ such that the restriction
of $G^{\psi}$ on $O(\mathbf{u})$ is a submersion. Then $\mu(O(\psi')\cap B)>0$
implies $\lambda((G^{\psi})^{-1}(B)\cap O(\mathbf{u}))>0$. So the lemma
is proved.
\end{proof}

Consider the probability density function of $\xi_{1},\tau_{1},\ldots,\xi_{m},\tau_{m}$
w.r.t.\ Lebesgue measure on $\mathcal{U}$: 
\[
p(\mathbf{u})=\rho_{\xi}(u_{1})\rho_{\tau}(t_{1})\ldots\rho_{\xi}(u_{m})\rho_{\tau}(t_{m}),\quad\mathbf{u}=(u_{1},t_{1},\ldots,u_{m},t_{m})\in\mathcal{U}.
\]
We have an obvious equality: 
\[
P^{m}(\psi,B)=\int_{(G^{\psi})^{-1}(B)}p(\mathbf{u})d\mathbf{u}.
\]
Due to Lemma \ref{zeroSet-1} and this equality we conclude that
the strong irreducibility property holds for the chain $\psi_{k}$.

\section{Proof of Theorem \ref{thConvProcess}}

Theorem \ref{thConvProcess} follows from ergodicity of the embedded
chain $\psi_{k}$ (Theorem \ref{thConvChain}). Proof of this statement
 coincides almost verbatim with the proof of similar statement in
\cite{LM_6} (theorem 2 from the section ``Proof of theorem 2'').

\section{Proof of Theorem \ref{limitMeasurePropertyTh} }

In this case the process $\psi(t)$ is a time homogeneous Markov process.
Denote by $\mathcal{A}$ the infinitesimal generator of $\psi(t)$, i.e.
\[
(\mathcal{A}f)(\psi)=\lim_{t\rightarrow0}\frac{E_{\psi}f(\psi(t))-f(\psi)}{t}.
\]
The well known formula is 
\begin{equation}
(\mathcal{A}f)(\psi)=\{H,f\}+\lambda I(f),\label{infGenerator}
\end{equation}
where 
\[
I(f)=E_{\xi}f(J(\xi;\psi))-f(\psi)=E_{v}f(q;\alpha p_{1}+(1-\alpha)Mv,p_{2},\ldots,p_{N})-f(q;p)
\]
and 
\[
\{H,f\}=\sum_{k=1}^{N}\frac{\partial H}{\partial{p_{k}}}\frac{\partial f}{\partial{q_{k}}}-\frac{\partial H}{\partial{q_{k}}}\frac{\partial f}{\partial{p_{k}}}
\]
is a Poisson bracket.

Let us prove the first part of Theorem \ref{limitMeasurePropertyTh}.
Assume that $p_{\beta}(\psi)$ is the Gibbs probability density of
the invariant measure for the process $\psi(t)$ for some $\beta>0$.
Thus $\mathcal{A}^{*}p_{\beta}(\psi)=0$ for all $\psi\in L$ where
$\mathcal{A}^{*}$ is a formal adjoint operator to $\mathcal{A}$
w.r.t.\ the standard inner product on $L_{2}(d\psi)$. For the bracket
we have 
\[
\{H,\cdot\}^{*}=-\{H,\cdot\}.
\]
Consequently, for all $\beta>0$ 
\[
\{H,p_{\beta}\}^{*}=0.
\]
Let us find $I^{*}$. For any functions $f,g\in L_{2}$ we have: 
\begin{align*}
 & \int E_{v}\{f(q;\alpha p_{1}+(1-\alpha)Mv,p_{1},\ldots,p_{N})\}g(q,p)\ dqdp=\\[5pt]
 & =\frac{1}{\alpha}\int f(q,p)E_{v}\Bigl\{g(q;\frac{p_{1}-(1-\alpha)Mv}{\alpha},p_{2},\ldots,p_{N})\Bigr\}\ dqdp.
\end{align*}
That is why 
\[
I^{*}(f)=\gamma E_{v}\{f(q;\gamma p_{1}-(1-\gamma)Mv,p_{2},\ldots,p_{N})\}-f(q,p),\quad \gamma=\frac{1}{\alpha} .
\]
In case $\alpha=0$ one should take the formal limit in this formula
for $I^{*}(f)$, then we will have $I^{*}(f)=1$.

If $f=p_{\beta}$, then 
\[
E_{v}\{p_{\beta}(q;\gamma p_{1}-(1-\gamma)Mv,p_{2},\ldots,p_{N})\}=
\]
\begin{equation}
  =\frac{1}{Z}e^{-\beta H(q,p)}\exp\Bigl\{\frac{\beta}{2M}p_{1}^{2}\Bigr\} E_{v}
  \Bigl\{\exp\Bigl(-\frac{\beta}{2M}(\gamma p_{1}-(1-\gamma)Mv)^{2}\Bigr)\Bigr\} . \label{IAdjPb}
\end{equation}

\begin{lemma}\label{-L_gaussian} If $\eta$ is a gaussian random
variable with mean $a$ and variance $\sigma^{2}$ then 
\[
E\exp(-\eta^{2})=\sqrt{\frac{1}{1+2\sigma^{2}}}\exp\left(-\frac{a^{2}}{1+2\sigma^{2}}\right).
\]
\end{lemma}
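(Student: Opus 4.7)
The statement is a classical Gaussian identity and the plan is a direct computation with Gaussian integrals; no probabilistic subtlety is needed. First I would write the expectation as an integral against the density of $\eta$:
\[
E e^{-\eta^{2}} = \frac{1}{\sqrt{2\pi}\,\sigma}\int_{-\infty}^{\infty} \exp\Bigl(-x^{2}-\frac{(x-a)^{2}}{2\sigma^{2}}\Bigr)\,dx.
\]

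Next, I would combine the two quadratic expressions in the exponent into a single quadratic in $x$ over the common denominator $2\sigma^{2}$:
\[
-x^{2}-\frac{(x-a)^{2}}{2\sigma^{2}} = -\frac{(1+2\sigma^{2})x^{2}-2ax+a^{2}}{2\sigma^{2}},
\]
and then complete the square in $x$. Writing $\beta = 1+2\sigma^{2}$, the numerator equals
\[
\beta\Bigl(x-\tfrac{a}{\beta}\Bigr)^{2} + a^{2}\Bigl(1-\tfrac{1}{\beta}\Bigr) = \beta\Bigl(x-\tfrac{a}{\beta}\Bigr)^{2} + \frac{2\sigma^{2}a^{2}}{\beta}.
\]
Substituting back, the exponent becomes
\[
-\frac{\beta}{2\sigma^{2}}\Bigl(x-\tfrac{a}{\beta}\Bigr)^{2}-\frac{a^{2}}{\beta},
\]
so the $x$-independent factor $\exp(-a^{2}/(1+2\sigma^{2}))$ pulls out of the integral, which is exactly the factor that should appear in the claim.

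Finally, what remains is a standard Gaussian integral $\int e^{-cx^{2}}dx = \sqrt{\pi/c}$ with $c = (1+2\sigma^{2})/(2\sigma^{2})$, and after multiplying by the normalizing constant $1/(\sqrt{2\pi}\,\sigma)$ all the $\sigma$ and $\pi$ factors cancel except for $(1+2\sigma^{2})^{-1/2}$, which matches the prefactor in the lemma. There is no real obstacle here; the only thing to watch is the algebraic bookkeeping when completing the square, so that both the coefficient of the leading quadratic and the constant term $a^{2}/(1+2\sigma^{2})$ come out with the correct sign and denominator.
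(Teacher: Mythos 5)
Your proposal is correct and follows essentially the same route as the paper: both write $E\exp(-\eta^{2})$ as an integral against the Gaussian density, complete the square in the exponent to extract the constant factor $\exp\bigl(-a^{2}/(1+2\sigma^{2})\bigr)$, and evaluate the remaining Gaussian integral to obtain the prefactor $(1+2\sigma^{2})^{-1/2}$. The only cosmetic difference is that you clear the denominator with $\beta=1+2\sigma^{2}$ before completing the square, whereas the paper keeps the coefficient in the form $1+\tfrac{1}{2\sigma^{2}}$; the algebra is identical.
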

\begin{proof}
  By definition we have 
\[
E\exp(-\eta^{2})=\frac{1}{\sqrt{2\pi\sigma^{2}}}\int_{\mathbb{R}}e^{-x^{2}}e^{-(x-a)^{2}/(2\sigma^{2})}dx.
\]
On the other hand 
\[
x^{2}+\frac{(x-a)^{2}}{2\sigma^{2}}=x^{2}\Bigl(1+\frac{1}{2\sigma^{2}}\Bigr)-x\frac{a}{\sigma^{2}}+\frac{a^{2}}{2\sigma^{2}}=\vphantom{\int}
\]
\[
  =\Bigl(1+\frac{1}{2\sigma^{2}}\Bigr)\Bigl(x-\frac{a}{2\sigma^{2}}\frac{1}{1+\frac{1}{2\sigma^{2}}}\Bigr)^{2}+
  \frac{a^{2}}{2\sigma^{2}}-\frac{a^{2}}{4\sigma^{4}}\frac{1}{1+\frac{1}{2\sigma^{2}}}=\vphantom{\int}
\]
\[
  =\Bigl(1+\frac{1}{2\sigma^{2}}\Bigr)\Bigl(x-\frac{a}{2\sigma^{2}}\frac{1}{1+\frac{1}{2\sigma^{2}}}\Bigr)^{2}+
  \frac{a^{2}}{1+2\sigma^{2}}=\frac{(x-\frac{a}{2\sigma^{2}+1})^{2}}{2\frac{\sigma^{2}}{1+2\sigma^{2}}}+\frac{a^{2}}{1+2\sigma^{2}}.\vphantom{\int^A}
\]
Thus, 
\begin{align*}
  E\exp(-\eta^{2})&=\frac{1}{\sqrt{2\pi\sigma^{2}}}\sqrt{2\pi\frac{\sigma^{2}}{1+2\sigma^{2}}}
                    \exp\Bigl(-\frac{a^{2}}{1+2\sigma^{2}}\Bigr)\\
  &=\sqrt{\frac{1}{1+2\sigma^{2}}}\exp\Bigl(-\frac{a^{2}}{1+2\sigma^{2}}\Bigr) .
\end{align*}
So the lemma is proved.
\end{proof}

Now assume that the velocity of the external particle $v$ is a zero
mean gaussian random variable with the variance $\sigma_{v}^{2}$.
From the equation (\ref{IAdjPb}) and lemma \ref{-L_gaussian} we
have 
\begin{align*}
  &E_{v}\{p_{\beta}(q;\gamma p_{1}-(1-\gamma)Mv,p_{2},\ldots,p_{N})\}\\
  &\quad =\frac{1}{Z}e^{-\beta H(q,p)}\exp\Bigl\{\frac{\beta}{2M}p_{1}^{2}\Bigr\}\sqrt{\frac{1}{1+2\sigma^{2}}}\exp\Bigl(-\frac{a^{2}}{1+2\sigma^{2}}\Bigr),
\end{align*}
where 
\[
a=E_{v}\sqrt{\frac{\beta}{2M}}(\gamma p_{1}-(1-\gamma)Mv)=\sqrt{\frac{\beta}{2M}}\gamma p_{1},
\]
\[
\sigma^{2}=D\sqrt{\frac{\beta}{2M}}(\gamma p_{1}-(1-\gamma)Mv)=\frac{\beta M}{2}(1-\gamma)^{2}\sigma_{v}^{2}.
\]
So, 
\[
E_{v}\{p_{\beta}(q;\gamma p_{1}-(1-\gamma)v,p_{2},\ldots,p_{N})\}=
\]
\[
    =\frac{1}{Z}e^{-\beta H(q,p)}\sqrt{\frac{1}{1+2\sigma^{2}}}\exp\Bigl(\frac{\beta}{2M}p_{1}^{2}
    \Bigl(1-\frac{\gamma^{2}}{1+\beta M(1-\gamma)^{2}\sigma_{v}^{2}}\Bigr)\Bigr).
\]
Choose $\beta$ such that 
\[
1-\frac{\gamma^{2}}{1+\beta M(1-\gamma)^{2}\sigma_{v}^{2}}=0,
\]
i.e. 
\[
\beta=\frac{\gamma^{2}-1}{M(1-\gamma)^{2}\sigma_{v}^{2}}=\frac{\gamma+1}{M(\gamma-1)\sigma_{v}^{2}}=\frac{1+\alpha}{M(1-\alpha)\sigma_{v}^{2}}.
\]

For this $\beta$ we have $\sigma^{2}=(\gamma^{2}-1)/2$ and 
\[
E_{v}\{p_{\beta}(q;\gamma p_{1}-(1-\gamma)Mv,p_{2},\ldots,p_{N})\}=\frac{1}{Z}e^{-\beta H(q,p)}\frac{1}{\gamma}.
\]
That is why 
\[
\mathcal{A}^{*}p_{\beta}=I^{*}p_{\beta}=0
\]
and so the Gibbs measure is an invariant for the process $\psi(t)$.

Conversely, let $p_{\beta}$ satisfy the equation $\mathcal{A}^{*}p_{\beta}=0$
for some $\beta>0$. Now we prove that in this case the distribution
of $v$ is gaussian with zero mean. From the (\ref{IAdjPb}) we have
that for all $p_{1}\in\mathbb{{\bf R}}$: 
\begin{equation}
\gamma E_{v}\Bigl\{\exp\Bigl(-\frac{\beta}{2M}(\gamma p_{1}-(1-\gamma)Mv)^{2}\Bigr)\Bigr\}=\exp\Bigl\{-\frac{\beta}{2M}p_{1}^{2} \Bigr\}. \label{rhoveq}
\end{equation}
Multiplying the both side of the last equality on $e^{i\lambda p_{1}}$
and integrating by $p_{1}$ over the $\mathbb{R}^{1}$ we obtain 
\[
    \int_{\mathbb{{\bf R}}^{1}}e^{i\lambda p_{1}}E_{v}\Bigl\{\exp\Bigl(-\frac{\beta}{2M}(\gamma p_{1}-(1-\gamma)Mv)^{2}\Bigr)\Bigr\}dp_{1}
    =\int_{\mathbb{{\bf R}}^{1}}e^{i\lambda p_{1}}e^{-\frac{\beta}{2M}p_{1}^{2}}dp_{1}.
\]
The right hand side is the characteristic function of the zero mean
gaussian random variable with the variance $\sqrt{M / \beta}$
and so 
\begin{equation}
\int_{\mathbb{{\bf R}}^{1}}e^{i\lambda p_{1}}e^{-\frac{\beta}{2M}p_{1}^{2}}dp_{1}=\sqrt{\frac{2\pi M}{\beta}}e^{-\frac{M\lambda^{2}}{2\beta}}. \label{rightp1expi}
\end{equation}
If we denote by $\rho$ the density of $v$ then for the left hand side
we have 
\[
l(\lambda)=\int_{\mathbb{{\bf R}}^{1}}e^{i\lambda p_{1}}E_{v}\Bigl\{\exp\Bigl(-\frac{\beta}{2M}(\gamma p_{1}-(1-\gamma)Mv)^{2}\Bigr)\Bigr\}dp_{1}=
\]
\[
=\int_{\mathbb{{\bf R}}^{1}}dv\rho(v)\int_{\mathbb{{\bf R}}^{1}}dp_{1}e^{i\lambda p_{1}}\exp\Bigl(-\frac{\beta}{2M}(\gamma p_{1}-(1-\gamma)Mv)^{2}\Bigr).
\]
The integral 
\[
\int_{\mathbb{{\bf R}}^{1}}e^{i\lambda p_{1}}\exp\Bigl(-\frac{\beta}{2M}(\gamma p_{1}-(1-\gamma)Mv)^{2}\Bigr)dp_{1}
\]
is the characteristic function of the gaussian random variable with
mean $a=(1-\gamma)Mv/\gamma$ and variance $\sigma^{2}=\sqrt{M / (\beta\gamma^{2})}$.
That is why 
\[
    \int_{\mathbb{{\bf R}}^{1}}e^{i\lambda p_{1}}\exp\Bigl(-\frac{\beta}{2M}(\gamma p_{1}-(1-\gamma)Mv)^{2}\Bigr) dp_{1}=\sqrt{\frac{2\pi M}{\beta\gamma^{2}}}
    \exp\Bigl(i\lambda a-\frac{\sigma^{2}\lambda^{2}}{2}\Bigr)=
\]
\[
=\sqrt{\frac{2\pi M}{\beta\gamma^{2}}}\exp\Bigl(i\lambda\frac{(1-\gamma)Mv}{\gamma}-\frac{M\lambda^{2}}{2\beta\gamma^{2}}\Bigr).
\]
Then 
\[
    l(\lambda)=\sqrt{\frac{2\pi M}{\beta\gamma^{2}}}\exp\Bigl(-\frac{M\lambda^{2}}{2\beta\gamma^{2}}\Bigr)
    \int_{\mathbb{R}^{1}}\exp\Bigl(i\lambda\frac{(1-\gamma)Mv}{\gamma}\Bigr)\rho(v)dv=
\]
\[
=\sqrt{\frac{2\pi M}{\beta\gamma^{2}}}\exp\Bigl(-\frac{M\lambda^{2}}{2\beta\gamma^{2}}\Bigr)\phi\Bigl(\frac{(1-\gamma)M\lambda}{\gamma}\Bigr),
\]
where $\phi(\lambda)$ is a characteristic function of $v$: 
\[
\phi(\lambda)=\int_{\mathbb{{\bf R}}}e^{i\lambda v}\rho(v)dv.
\]
Consequently from the (\ref{rhoveq}) and (\ref{rightp1expi}) for
all $\lambda\in\mathbb{{\bf R}}$ we have 
\[
\phi\Bigl(\frac{(1-\gamma)M\lambda}{\gamma}\Bigr)=\exp\Bigl(-\frac{M\lambda^{2}}{2\beta}+\frac{M\lambda^{2}}{2\beta\gamma^{2}}\Bigr).
\]
Immediately we conclude that the distribution of $v$ is zero mean
gaussian. So the first part of Theorem \ref{thConvProcess} is
    proved.

    Now we prove the second part.
Let us denote 
\[
Q_{k}(t)=E_{\psi}q_{k}(t)=E\{q_{k}(t)|\psi(0)=\psi\},\;\, P_{k}(t)=E_{\psi}p_{k}(t)=E\{p_{k}(t)|\psi(0)=\psi\}.
\]
Then using formula (\ref{infGenerator}) for the infinitesimal generator
we get the following ordinary differential equations 
\begin{align*}
\dot{Q}_{k}= & \ \frac{1}{M}P_{k},\\
\dot{P}_{k}= & \ -\sum_{j=1}^{N}V_{k,j}Q_{j}-\lambda(1-\alpha)P_{1}(t)\delta_{k,1},
\end{align*}
for all $k=1,\ldots,N$, where $\delta_{k,1}$ is a Kronecker symbol
and the initial conditions are given by $(Q(0),P(0))=\psi$. If we
return to the velocity variables $V_{k}=P_{k}M$, then we get 
\begin{align*}
\dot{Q}_{k}= & \ V_{k},\\
\dot{V}_{k}= & \ -\frac{1}{M}\sum_{j=1}^{N}V_{k,j}Q_{j}-\lambda(1-\alpha)V_{1}(t)\delta_{k,1},
\end{align*}
for all $k=1,\ldots,N$.  Due to theorem 2.1 from \cite{LM_1}
we obtain 
\[
\lim_{t\rightarrow\infty}Q_{k}(t)=\lim_{t\rightarrow\infty}V_{k}(t)=0,
\]
for all $k=1,\ldots,N$ and any initial condition $\psi\in L$, i.e.
\begin{equation}
\lim_{t\rightarrow\infty}E_{\psi}\psi(t)=0,\label{meanPsiLim}
\end{equation}
for any initial $\psi(0)\in L$. Now let us consider the matrix 
\[
C(t)=E_{\psi}\psi(t)\psi^{T}(t).
\]

Denote by $\Gamma$ the $2N\times2N$ matrix all elements of which are
equal to $0$ except $(2N+1,2N+1)$-element which is equal to $1$.

\begin{lemma} The matrix $C(t)$ satisfies the following differential
equation 
\begin{equation}
\dot{C}(t)=AC+CA^{T}-\lambda(1-\alpha)(\Gamma C+C\Gamma-(1-\alpha)\Gamma C\Gamma)+\lambda(1-\alpha)^{2}M^{2}\sigma^{2}\Gamma \label{dotCeq}
\end{equation}
where matrix $A$ was defined in (\ref{LHamVec}).
\end{lemma}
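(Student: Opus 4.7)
The plan is to derive (\ref{dotCeq}) by applying Dynkin's formula $\frac{d}{dt}E_\psi f(\psi(t)) = E_\psi(\mathcal{A}f)(\psi(t))$ to the coordinate functions $f_{ij}(\psi) = \psi_i\psi_j$, so that $\dot C_{ij}(t)$ is computed entry by entry from the generator formula (\ref{infGenerator}), and then the result is repackaged as a matrix equation. Let $i_0$ denote the index of $p_1$ in the ambient vector $\psi$, i.e.\ $\Gamma = e_{i_0}e_{i_0}^T$ is the rank-one projector onto the $p_1$-direction (the index label in the statement of $\Gamma$ should be read as this position).

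For the Poisson-bracket part of $\mathcal{A}$, the identity $\{H,\psi_i\} = (A\psi)_i$ (which is just $\dot\psi_i$ along the Hamiltonian flow (\ref{LHamVec})) together with the Leibniz rule gives
\[
\{H,\psi_i\psi_j\} = (A\psi)_i\psi_j + \psi_i(A\psi)_j,
\]
so that its contribution to $\dot C_{ij}$ is $(AC + CA^T)_{ij}$.

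For the collision part $\lambda I(f_{ij})$, recall that $J(\xi;\psi)$ only changes the $p_1$-component, replacing it by $\alpha p_1 + (1-\alpha)Mv$, with $E v = 0$ and $Ev^2 = \sigma^2$. A direct case analysis gives: (a) if $i,j\neq i_0$, then $I(f_{ij})=0$; (b) if exactly one of $i,j$ equals $i_0$, say $i=i_0$, then using $Ev=0$, $I(f_{ij})(\psi) = -(1-\alpha)\psi_{i_0}\psi_j$; (c) if $i=j=i_0$, then
\[
I(f_{i_0 i_0})(\psi) = E_v(\alpha p_1 + (1-\alpha)Mv)^2 - p_1^2 = -(1-\alpha^2)p_1^2 + (1-\alpha)^2 M^2\sigma^2.
\]
Since $(\Gamma C)_{ij} = \delta_{i,i_0}C_{i_0,j}$, $(C\Gamma)_{ij} = \delta_{j,i_0}C_{i,i_0}$ and $(\Gamma C\Gamma)_{ij} = \delta_{i,i_0}\delta_{j,i_0}C_{i_0,i_0}$, one checks entry by entry (noting $2-(1-\alpha)=1+\alpha$ so that $(1-\alpha)(1+\alpha)=1-\alpha^2$) that the three cases above amount to
\[
E_\psi\lambda I(f_{ij})(\psi(t)) = -\lambda(1-\alpha)\bigl(\Gamma C + C\Gamma - (1-\alpha)\Gamma C\Gamma\bigr)_{ij} + \lambda(1-\alpha)^2 M^2\sigma^2\,\Gamma_{ij}.
\]
Adding this to the Hamiltonian contribution yields (\ref{dotCeq}).

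There is no serious obstacle; the only point requiring care is the justification of Dynkin's formula for the unbounded quadratic test functions $f_{ij}$. This is handled by a standard cutoff argument: since Lemma \ref{LF_2} provides a quadratic Lyapunov function $H$ with a supermartingale-type drift estimate, all second moments $E_\psi|\psi(t)|^2$ are locally bounded in $t$, so one may pass to the limit in the localized Dynkin identity and remove the cutoff.
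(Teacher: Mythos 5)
Your proof is correct and takes essentially the same route as the paper: both apply the generator (\ref{infGenerator}) to the quadratic observables $\psi\psi^{T}$, with the Poisson bracket giving $AC+CA^{T}$ and the jump part computed from $p_{1}\to\alpha p_{1}+(1-\alpha)Mv$ using $Ev=0$, $Ev^{2}=\sigma^{2}$; the paper merely packages your case analysis (a)--(c) in matrix form via the affine map $\psi\to B\psi+(1-\alpha)Mvg$ with $B=E-(1-\alpha)\Gamma$, obtaining $I(\psi\psi^{T})=B\psi\psi^{T}B+(1-\alpha)^{2}M^{2}\sigma^{2}\Gamma-\psi\psi^{T}$ and then the identity $BCB-C=-(1-\alpha)(\Gamma C+C\Gamma)+(1-\alpha)^{2}\Gamma C\Gamma$. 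Your two extra touches --- reading the paper's out-of-range $(2N+1,2N+1)$ index as the position of $p_{1}$, and the cutoff justification of Dynkin's formula for the unbounded test functions, which the paper applies without comment --- are both sound.
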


\begin{proof}
  We will use formula (\ref{infGenerator}) for the infinitesimal
generator. To get the $\{H,C\}$-term note that it can be received
as a derivative of $\psi(t)\psi^{T}(t)$ by equation (\ref{LHamVec}).
In other words, if $\psi_{0}(t)$ is a solution of (\ref{LHamVec})
then 
\[
\frac{d}{dt}\psi_{0}(t)\psi_{0}^{T}(t)=A\psi_{0}(t)\psi_{0}^{T}(t)+\psi_{0}(t)\psi_{0}^{T}(t)A^{T}=\{H,\psi_{0}(t)\psi_{0}^{T}(t)\}.
\]
Now we want to find the term $I(f)$ in (\ref{infGenerator}).

Denote by $B$ the $(2N\times2N)$-diagonal matrix which has all diagonal
elements equal to $1$, except $(2N+1,2N+1)$-element which is equal
to $\alpha$ and $g$ is $2N$-vector with all entries equal to zero
except $(2N+1$)-element equal to $1$. Then $\Gamma=gg^{T}$ and
the transformation $p_{1}\to\alpha p_{1}+(1-\alpha)Mv$ can be written
as 
\[
\psi\to B\psi+(1-\alpha)Mvg.
\]
So we get 
\[
I(\psi\psi^{T})=E_{v}(B\psi+(1-\alpha)Mvg)(B\psi+(1-\alpha)Mvg)^{T}-\psi\psi^{T}=
\]
\[
=B\psi\psi^{T}B+(1-\alpha)^{2}M^{2}\sigma^{2}gg^{T}-\psi\psi^{T}.
\]
Consequently, we have the equation: 
\[
\dot{C}=AC+CA^{T}+\lambda(BCB-C)+\lambda(1-\alpha)^{2}M^{2}\sigma^{2}\Gamma.
\]
The matrix $B$ can be written as $B=E-(1-\alpha)gg^{T}$. So 
\[
BCB-C=-(1-\alpha)(\Gamma C+C\Gamma)+(1-\alpha)^{2}\Gamma C\Gamma .
\]
The lemma is proved.
\end{proof}

Let us introduce notation for the linear part of the right hand side
of equation (\ref{dotCeq}): 
\[
L(C)=AC+CA^{T}-\lambda(1-\alpha)(\Gamma C+C\Gamma-(1-\alpha)\Gamma C\Gamma).
\]
Then the right hand side of the equation (\ref{dotCeq}) can be rewritten
as 
\[
\dot{C}=L(C)+\lambda(1-\alpha)^{2}M^{2}\sigma^{2}\Gamma.
\]
Denote 
\[
C_{G}=\left(\begin{array}{cc}
V^{-1} & 0\\
0 & ME
\end{array}\right)
\]
the Gibbs matrix. Then 
\[
A=\left(\begin{array}{cc}
0 & \tfrac{1}{M}E\\[1pt]
-V & 0
\end{array}\right)
\]
and we have 
\[
AC_{G}+C_{G}A^{T}=AC_{G}+(AC_{G})^{T}=0,
\]
\[
\Gamma C_{G}+C_{G}\Gamma=2M\Gamma,\quad\Gamma C_{G}\Gamma=M\Gamma.
\]
Thus, for $\beta>0$ 
\[
L(\beta^{-1}C_{G})+\lambda(1-\alpha)^{2}M^{2}\sigma^{2}\Gamma\! =\! -\lambda(1-\alpha)M(2\beta^{-1}-(1-\alpha)\beta^{-1}\! -\! (1-\alpha)M\sigma^{2})\Gamma.
\]
If we choose $\beta>0$ so that 
\[
2\beta^{-1}-(1-\alpha)\beta^{-1}-(1-\alpha)M\sigma^{2}=0,
\]
i.e. 
\[
\beta^{-1}=\frac{1-\alpha}{1+\alpha}M\sigma^{2},
\]
then 
\[
L(\beta^{-1}C_{G})+\lambda(1-\alpha)^{2}M^{2}\sigma^{2}\Gamma=0
\]
and we conclude that 
\[
C_{G,\beta}=\beta^{-1}C_{G}
\]
is a fixed point of the equation (\ref{dotCeq}).

\begin{lemma} For any initial non-negative definite $(2N\times2N)$-matrix
$C(0)$ the solution of the homogeneous equation 
\begin{equation}
\dot{C}(t)=L(C(t)) \label{homogCeq}
\end{equation}
converges to zero as $t\rightarrow\infty$.
\end{lemma}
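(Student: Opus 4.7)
The plan is to give a probabilistic interpretation of the homogeneous equation and then exploit energy dissipation. I interpret $\dot C = L(C)$ as the covariance evolution of the collision process in case 1-1 with $\sigma_v = 0$ (external particles at rest). Given any non-negative definite $C(0)$, represent it as $E[\psi(0)\psi(0)^T]$ for some centered random vector $\psi(0)$; a centered Gaussian with covariance $C(0)$ works and guarantees $EH(\psi(0))<\infty$. Run the collision dynamics with $v\equiv 0$, so at each collision $p_1\mapsto \alpha p_1$ deterministically. Repeating the derivation of the previous lemma with $\sigma_v=0$, the matrix $\tilde C(t):=E[\psi(t)\psi(t)^T]$ satisfies $\dot{\tilde C}=L(\tilde C)$ with $\tilde C(0)=C(0)$, so by uniqueness $C(t)=\tilde C(t)$, and it suffices to show $\tilde C(t)\to 0$.

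Next I prove $EH(\psi(t))\to 0$. With $\sigma_v=0$, a pathwise computation at each collision gives
\[
H(\psi_k)-H(\psi_{k-1}) = -\tfrac{1-\alpha^2}{2}\,|p_1^0(\tau_k;\psi_{k-1})|^2,
\]
so $H(\psi_k)$ is non-increasing along every trajectory. Taking conditional expectation and invoking Proposition \ref{prop61} together with the norm-equivalence argument already used in Lemma \ref{LF_2}, one gets $\Delta(\psi)\geq rH(\psi)$ for some $r>0$, and hence $E[H(\psi_k)\mid \psi_{k-1}]\leq (1-r'')H(\psi_{k-1})$ with $r''=r(1-\alpha^2)/2\in(0,1)$. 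Iterating yields $EH(\psi_n)\leq (1-r'')^n EH(\psi_0)\to 0$, so $H(\psi_n)\to 0$ in probability; combined with a.s.\ monotonicity, $H(\psi_n)\to 0$ almost surely.

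Since $H$ is conserved between collisions, $H(\psi(t))=H(\psi_{N(t)})$, where $N(t)$ is the number of collisions by time $t$. By the strong law of large numbers ($E\tau_1<\infty$), $N(t)\to\infty$ a.s., so $H(\psi(t))\to 0$ a.s. The bound $0\leq H(\psi(t))\leq H(\psi(0))$ with $EH(\psi(0))<\infty$ gives $EH(\psi(t))\to 0$ by dominated convergence. Because $H$ is a positive definite quadratic form on $L$, there exists $c_0>0$ with $H(\psi)\geq c_0\|\psi\|^2$; therefore $E\|\psi(t)\|^2\to 0$, and every entry of $C(t)=E[\psi(t)\psi(t)^T]$ vanishes in the limit by Cauchy--Schwarz, which is the claim.

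The main obstacle is the passage from discrete-time decay of $EH(\psi_n)$ to continuous-time decay of $EH(\psi(t))$; the route above combines convergence in probability with almost sure monotonicity and dominated convergence, avoiding delicate joint bookkeeping of $N(t)$ and the embedded chain. A sharper alternative is a Poisson-thinning argument giving the explicit rate $EH(\psi(t))\leq e^{-\lambda r'' t}EH(\psi(0))$, but one must then handle the dependence between $N(t)$ and $\psi_{N(t)}$ more carefully; either route suffices for the qualitative statement required by the lemma.
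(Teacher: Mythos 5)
Your proof is correct, but it takes a genuinely different route from the paper's. The paper argues deterministically at the level of the matrix ODE: it introduces the Lyapunov function $F(C)=\mathrm{Tr}(C_{G}^{-1}C)$, computes $\frac{d}{dt}F(C(t))=-\lambda\frac{1-\alpha^{2}}{M}C_{2N+1,2N+1}(t)\leqslant0$ along solutions of (\ref{homogCeq}), and then applies the Barbashin--Krasovskij (LaSalle-type) invariance theorem; the key step is to exclude nontrivial trajectories lying in $\{C:\,C_{2N+1,2N+1}=0\}$, which is done by showing such a trajectory would force $e^{tA_{D}^{T}}g\in\mathrm{Ker}\,C(0)$ for all $t$, contradicting the completeness property of $V$ via an expansion in the vectors $(A_{D}^{T})^{k}g$. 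You instead build a stochastic representation: realize $C(0)$ as the second-moment matrix of a centered Gaussian initial state, run the collision process with $v\equiv0$ (so the inhomogeneous term $\lambda(1-\alpha)^{2}M^{2}\sigma^{2}\Gamma$ vanishes and, by linearity and uniqueness of the matrix ODE, $E[\psi(t)\psi(t)^{T}]$ is the solution of (\ref{homogCeq})), and then exploit pathwise monotonicity of $H$ together with the drift estimate $\Delta(\psi)\geqslant rH(\psi)$ from the proof of Lemma \ref{LF_2} (which rests on the same completeness property through Proposition \ref{prop61}) to get geometric decay $EH(\psi_{n})\leqslant(1-r'')^{n}EH(\psi_{0})$, and finally pass to continuous time via monotonicity, $N(t)\to\infty$, and dominated convergence. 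Note that with $v\equiv0$ your drift inequality is indeed global---the additive constant $M=\sup_{p\in K}E|J(\xi;p)|^{2}$ in Lemma \ref{LF_2} disappears, so no exceptional compact set is needed---and the passage from the generator to the ODE for $E[\psi(t)\psi(t)^{T}]$ with random initial data is justified by conditioning and the uniform bound $E\|\psi(t)\|^{2}\leqslant c\,EH(\psi(0))$; both points deserve the one-line justifications you essentially give. What each approach buys: yours yields an explicit exponential rate for the mean energy and avoids invoking the invariance principle and the kernel argument, at the cost of constructing and justifying the probabilistic representation; the paper's proof stays entirely at the deterministic ODE level and needs no stochastic apparatus, but gives no rate and leans on the Barbashin--Krasovskij theorem from \cite{Barbashin}.
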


\begin{proof}
  For the proof we want to find a corresponding Lyapunov function.
Consider the function 
\[
F(C)=\mathrm{Tr}(C_{G}^{-1}C),
\]
where $\mathrm{Tr}$ denotes the trace. Note that if $C$ is non-negative
definite then $F(C)\geqslant0$ and $F(C)=0$ iff $C=0$. We will
prove that $F(C)$ is non-increasing along the trajectories of the
equation (\ref{homogCeq}). If $C(t)$ a solution of the equation
(\ref{homogCeq}) we have 
\[
\frac{d}{dt}F(C(t))=\mathrm{Tr}(C_{G}^{-1}L(C))=
\]
\[
  =\mathrm{Tr}(C_{G}^{-1}AC+C_{G}^{-1}CA^{T}) - 
\]
\[
  -\lambda(1-\alpha)\mathrm{Tr}(C_{G}^{-1}\Gamma C+C_{G}^{-1}C\Gamma-(1-\alpha)C_{G}^{-1}\Gamma C\Gamma)
  \mathrm{Tr}(C_{G}^{-1}AC+C_{G}^{-1}CA^{T})=
\]
\[
  = \mathrm{Tr}((C_{G}^{-1}A+A^{T}C_{G}^{-1})C)=0.
\]
\[
\mathrm{Tr}(C_{G}^{-1}\Gamma C+C_{G}^{-1}C\Gamma-(1-\alpha)C_{G}^{-1}\Gamma C\Gamma)=
\]
\[
=\mathrm{Tr}((C_{G}^{-1}\Gamma+\Gamma C_{G}^{-1}-(1-\alpha)\Gamma C_{G}^{-1}\Gamma)C)=
\]
\[
=\Bigl(\frac{2}{M}-\frac{(1-\alpha)}{M}\Bigr)\mathrm{Tr}(\Gamma C)=\frac{(1+\alpha)}{M}C_{2N+1,2N+1}.
\]
Thus 
\begin{equation}
\frac{d}{dt}F(C(t))=-\lambda\frac{(1-\alpha^{2})}{M}C_{2N+1,2N+1}(t)\leqslant0.\label{LyapFuncC}
\end{equation}
Further, we want to use Barbashin\tire Krasovskij's theorem (\hspace{-0.1pt}\cite{Barbashin},
p.\thinspace 19, Th.\thinspace 3.2). For this we need to check that the set of the non-negative
definite matrices $C$, whose $(2N+1,2N+1)$-element is zero, does
not contain the solution of (\ref{homogCeq}) except for the zero
solution. Assume the contrary, i.e.\ that there exists solution $C(t)$
of (\ref{homogCeq}) such that $C_{2N+1,2N+1}(t)=0$ for all $t\geqslant0$.
For such solution we have $\Gamma C(t)\Gamma=0$. Thus 
\[
\dot{C}=AC+CA^{T}-\lambda(1-\alpha)(\Gamma C+C\Gamma)=A_{D}C+CA_{D}^{T},
\]
where $A_{D}=A-\lambda(1-\alpha)\Gamma$. It is easy to see that the
solution of the last equation is given by the formula: 
\[
C(t)=\int_{0}^{t}e^{sA_{D}}C(0)e^{sA_{D}^{T}}\ ds+C(0).
\]
Denote by $(,)$ the standard Euclidean inner product on $L$. Then we
have for all $t\geqslant0$ 
\[
0=C_{2N+1,2N+1}(t)=(g,C(t)g)=\int_{0}^{t}(g,e^{sA_{D}}C(0)e^{sA_{D}^{T}}g)\ ds=
\]
\[
=\int_{0}^{t}(e^{sA_{D}^{T}}g,C(0)e^{sA_{D}^{T}}g)\ ds=0.
\]
Since $C(0)$ is non-negative definite, the last equality takes place 
iff $\exp\{tA_{D}^{T}\}g$ $\in\mathrm{Ker}(C(0))$ for all $t\geqslant0$,
where $\mathrm{Ker}C(0)$ is the kernel of $C(0)$. Consequently,
there exists non-zero vector $u\in L$ such that $(e^{tA_{D}^{T}}g,u)=0$
for all $t\geqslant0$. But 
\[
\bigl(e^{tA_{D}^{T}}g,u\bigr)=\sum_{k=0}^{\infty}\frac{t^{k}}{k!}\bigl((A_{D}^{T})^{k}g,u\bigr).
\]
Then $((A_{D}^{T})^{k}g,u)=0$ for all $k=0,1,\ldots$ Due to the
completeness property of $V$ and lemma 3.1 of \cite{LM_1} (see the
proof of the latter lemma) we conclude that $u=0$. This contradicts
our assumptions. So the set of non-negative definite matrices $C$,
whose $(2N+1,2N+1)$-element is zero, does not contain the whole trajectories
of the solution of (\ref{homogCeq}). Then using (\ref{LyapFuncC})
and Barbashin\tire Krasovskij's theorem we get the proof of the lemma.
\end{proof}

Let us come back to the proof of the second part of theorem \ref{limitMeasurePropertyTh}.
Remind what we have already proved. If $\psi(t)$ is our process,
then we have proved that 
\[
\lim_{t\rightarrow\infty}E_{\psi}\psi(t)=0.
\]
Then the matrix $C(t)=E_{\psi}\psi(t)\psi(t)^{T}$ satisfies the equation
(\ref{dotCeq}) which has fixed point $C_{G,\beta}$. Moreover, the
solution of the homogeneous part of ( \ref{dotCeq} ) converges to
zero as $t\rightarrow\infty$ due to the last lemma. All these facts
give us the second proposition of Theorem \ref{limitMeasurePropertyTh}.
So the Theorem is finally proved.

\section{APPENDIX}

\subsection{Collisions}

Here we prove that two simplest collision cases satisfy our general
Condition~$2$.

Moreover, assume that their interaction radius is small, and collision
time is short. That is during this time the potential between our
particle of mass $M$ and other part of the system does not change.

\paragraph{One-dimensional collisions}

Let $d=1$. Assume that a particle of mass $M$ of the $N$-particle
system collides with external particles of mass $m$ and velocity
$v$. Then conservation of energy and momentum give the formula (see
for example in this issue \cite{M_Intro_1}): 
\[
p'=J(v;p)=\alpha p+M(1-\alpha)v,\quad\alpha=\frac{M-m}{M+m},
\]
where $p,p'$ are the momenta of our particle of mass $M$ before
and after collision correspondingly.

\begin{lemma} Assume that the distribution of the velocity $v$ has
density on $R$ and its support coincides with $\mathbb{{\bf R}}$.
Moreover, assume that $a_{2}=Ev^{2}<\infty$. Then Condition 2 
holds. \end{lemma}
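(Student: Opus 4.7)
The plan is to verify the three clauses of Condition~2 in turn for the affine map $J(v;p) = \alpha p + M(1-\alpha) v$, taking $l=1$, $d=1$, and $\mathcal{O}_{\xi} = \mathbb{R}$ (the support of the density of $v$).

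Item~1 is immediate since $J$ is affine in both arguments, hence entire real-analytic on $\mathbb{R} \times \mathbb{R}$, and $\mathcal{O}_{\xi} = \mathbb{R}$ is open. For item~2, the key observation is that $M(1-\alpha) > 0$, so the partial map $v \mapsto J(v;p)$ is an affine bijection of $\mathbb{R}$ onto $\mathbb{R}$. Thus its image $J(\mathbb{R}, p)$ equals $\mathbb{R}$, which in particular contains the ``sphere of radius $|p|$'' --- in dimension one this sphere is simply the two-point set $\{\pm|p|\}$. Moreover, for any $h \geqslant 0$ the equation $\alpha p + M(1-\alpha) v = h$ has the explicit solution $v = (h - \alpha p)/(M(1-\alpha))$, furnishing the required $g(p,h)$.

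For item~3, I would expand
\[
E_{v} J(v;p)^{2} = \alpha^{2} p^{2} + 2\alpha M(1-\alpha) a_{1} p + M^{2}(1-\alpha)^{2} a_{2},
\]
where $a_{1} = E v$ is finite by the Cauchy--Schwarz inequality (since $a_{2} = Ev^{2} < \infty$). Fix any $\tilde{\alpha} \in (\alpha^{2}, 1)$. Then $\tilde{\alpha} p^{2} - E_{v} J(v;p)^{2}$ is a quadratic polynomial in $p$ with positive leading coefficient $\tilde{\alpha} - \alpha^{2}$, hence non-negative outside some bounded interval $\{|p| \leqslant R\}$. Taking $K$ to be this interval provides the desired constant $\tilde{\alpha} < 1$ and compact set, completing the verification.

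The argument is entirely elementary; the only point requiring care is the notational collision between the concrete physical constant $\alpha = (M-m)/(M+m)$ appearing in the formula for $J$ and the abstract constant (also denoted $\alpha$) requested in item~3 of Condition~2. This is resolved by introducing a fresh name $\tilde{\alpha} \in (\alpha^{2}, 1)$ for the latter, with the understanding that any value strictly between $\alpha^{2}$ and $1$ will do.
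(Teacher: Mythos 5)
Your proof is correct and takes essentially the same approach as the paper: both expand $E_{v}\{J^{2}(v;p)\}=\alpha^{2}p^{2}+2M\alpha(1-\alpha)a_{1}p+M^{2}(1-\alpha)^{2}a_{2}$ and note that outside a sufficiently large compact interval this is dominated by $\tilde{\alpha}p^{2}$ for a constant $\tilde{\alpha}<1$ exceeding $\alpha^{2}$ (the paper writes $\alpha^{2}+\epsilon$ where you write $\tilde{\alpha}\in(\alpha^{2},1)$). Your explicit checks of items 1 and 2, which the paper dismisses as evident, and the Cauchy--Schwarz remark ensuring $a_{1}=Ev$ is finite, are harmless elaborations of the identical argument.
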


\begin{proof} Points 1 and 2 evidently hold. Let us prove point 3:
\[
E_{v}\{J^{2}(v;p)\}=\alpha^{2}p^{2}+2M\alpha(1-\alpha)a_{1}p+M^{2}(1-\alpha)^{2}a_{2},\quad a_{1}=Ev.
\]
But it is clear that for any $\epsilon>0$ there exists $P$ such
that for any $p$ such that $|p|>P$ 
\[
E_{v}\{J^{2}(v;p)\}<(\alpha^{2}+\epsilon)p^{2}.
\]
This gives the proof.
\end{proof}

\paragraph{Two-dimensional non central collision}

Let $d=2$. Consider elastic collision of two ideal two-dimensional
balls with masses $m_{1}$ and $m_{2}$ on ${\bf R}^{2}$. Assume
that at the moment of collision the centers of the balls have coordinates
\[
O_{1}=(x_{1},y_{1}),\quad O_{2}=(x_{2},y_{2}),
\]
and velocities 
\[
\vec{v}_{1}=(v_{1}^{x},v_{1}^{y})^{T},\quad\vec{v}_{2}=(v_{2}^{x},v_{2}^{y})^{T},
\]
(vector here are column vectors). Assume obviously that $O_{1}\ne O_{2}$.
Denote by 
\[
(\vec{v}_{1})'=((v_{1}^{x})',(v_{1}^{y})')^{T},\quad(\vec{v}_{2})'=((v_{2}^{x})',(v_{2}^{y})')^{T}
\]
the velocities of the balls after the collisions.

Define the normalized vector connecting the centers of the balls:
\[
\vec{R}=\frac{1}{|O_{1}O_{2}|}\overrightarrow{O_{1}O_{2}},
\]
where $|O_{1}O_{2}|$ is the distance between points $O_{1}$ and
$O_{2}$.

\begin{lemma} \label{twodimColl} There exists $\phi\in[0,2\pi)$
such that 
\[
\vec{R}=R(\phi)=(\cos\phi,\sin\phi)^{T}.
\]
Then also
\[
(\vec{v}_{1})'=G_{\alpha}(\phi)\vec{v}_{1}+c_{\alpha}(\phi,\vec{v}_{2})R(\phi),
\]
where 
\begin{align*}
G_{\alpha}(\phi)= & \left(\begin{array}{cc}
\alpha\cos^{2}\phi+\sin^{2}\phi & -\frac{1-\alpha}{2}\sin2\phi\\[3pt]
-\frac{1-\alpha}{2}\sin2\phi & \alpha\sin^{2}\phi+\cos^{2}\phi
\end{array}\right),\quad\alpha=\frac{m_{1}-m_{2}}{m_{1}+m_{2}},\\[3pt]
c_{\alpha}(\phi,\vec{v}_{2})= & (1-\alpha)(v_{2}^{x}\cos\phi+v_{2}^{y}\sin\phi)
\end{align*}
\end{lemma}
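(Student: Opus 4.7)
The existence of $\phi$ is immediate: $\vec{R}$ is a unit vector in $\mathbb{R}^{2}$, so it has a unique polar representation $\vec{R}=(\cos\phi,\sin\phi)^{T}$ with $\phi\in[0,2\pi)$. The substance of the lemma is the derivation of the collision formula, and the plan is to reduce the two-dimensional problem to the one-dimensional formula already verified in the previous lemma, exploiting the standard physical fact that for elastic collision of ideal spheres the impulsive force acts only along the line of centers $\vec{R}$.

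I would introduce the orthonormal frame $\vec{e}_{\parallel}=R(\phi)$ and $\vec{e}_{\perp}=(-\sin\phi,\cos\phi)^{T}$, and split each velocity into its $\vec{e}_{\parallel}$- and $\vec{e}_{\perp}$-components. Since no force acts perpendicular to $\vec{R}$, the perpendicular components are unchanged:
\[
(\vec{v}_{1})'\cdot\vec{e}_{\perp}=\vec{v}_{1}\cdot\vec{e}_{\perp}.
\]
Along the line of centers, momentum and kinetic energy conservation for the two-body system reduce to the one-dimensional elastic collision formula (which is exactly the transformation used in Condition 1-1 and verified in the preceding one-dimensional lemma), so
\[
(\vec{v}_{1})'\cdot\vec{e}_{\parallel}=\alpha(\vec{v}_{1}\cdot\vec{e}_{\parallel})+(1-\alpha)(\vec{v}_{2}\cdot\vec{e}_{\parallel}),\qquad \alpha=\frac{m_{1}-m_{2}}{m_{1}+m_{2}}.
\]

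Reassembling,
\[
(\vec{v}_{1})'=\bigl[\alpha R(\phi)R(\phi)^{T}+\vec{e}_{\perp}\vec{e}_{\perp}^{T}\bigr]\vec{v}_{1}+(1-\alpha)(R(\phi)\cdot\vec{v}_{2})R(\phi).
\]
Using the orthogonal decomposition of the identity $I=R(\phi)R(\phi)^{T}+\vec{e}_{\perp}\vec{e}_{\perp}^{T}$, the bracketed matrix becomes $I-(1-\alpha)R(\phi)R(\phi)^{T}$. Computing entries with $R(\phi)R(\phi)^{T}=\left(\begin{smallmatrix}\cos^{2}\phi & \cos\phi\sin\phi\\ \cos\phi\sin\phi & \sin^{2}\phi\end{smallmatrix}\right)$ and using $2\cos\phi\sin\phi=\sin 2\phi$ gives precisely the matrix $G_{\alpha}(\phi)$ from the statement; the second term expands to $(1-\alpha)(v_{2}^{x}\cos\phi+v_{2}^{y}\sin\phi)R(\phi)=c_{\alpha}(\phi,\vec{v}_{2})R(\phi)$.

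There is no real obstacle here, as the derivation is a standard textbook computation; the only point requiring care is the algebraic simplification of $I-(1-\alpha)R(\phi)R(\phi)^{T}$ into the explicit form of $G_{\alpha}(\phi)$, where one must consistently apply the double-angle identity. The justification of the one-dimensional reduction step could be stated either by invoking the classical two-body elastic collision formulas in the center-of-mass frame along $\vec{R}$, or by verifying directly from conservation of $m_{1}\vec{v}_{1}+m_{2}\vec{v}_{2}$, $m_{1}|\vec{v}_{1}|^{2}+m_{2}|\vec{v}_{2}|^{2}$, and the constraint that $\vec{v}_{i}'-\vec{v}_{i}$ is parallel to $\vec{R}$.
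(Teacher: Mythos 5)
Your proof is correct and takes essentially the same route as the paper: decompose the velocities into components normal and tangential to $\vec{R}(\phi)$, argue that the tangential components are unchanged (no impulsive force perpendicular to the line of centers) while the normal components obey the one-dimensional elastic collision formula $v'_{1,n}=\alpha v_{1,n}+(1-\alpha)v_{2,n}$, and recombine in the original coordinates. The only difference is cosmetic: you package the recomposition as $I-(1-\alpha)R(\phi)R(\phi)^{T}$ via the projection identity, whereas the paper carries out the same trigonometric computation entrywise.
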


\begin{proof} Denote by $\vec{R}_{\perp}$ a unit vector (one of the two) orthogonal
to $\vec{R}$. Then we can expand the velocity in terms of tangential
and normal components with respect to the vector $\vec{R}$: 
\begin{align}
 & \vec{v}_{1}=v_{1,n}\vec{R}+v_{1,t}\vec{R}_{\perp},\quad\vec{v}_{2}=v_{2,n}\vec{R}+v_{2,t}\vec{R}_{\perp},\label{beforeCollDec}\\
 & (\vec{v}_{1})'=v'_{1,n}\vec{R}+v'_{1,t}\vec{R}_{\perp},\quad(\vec{v}_{2})'=v'_{2,n}\vec{R}+v'_{2,t}\vec{R}_{\perp}.
\end{align}
Then the conservation laws of energy and momentum are as follows:
\begin{align*}
 & m_{1}v_{1,n}+m_{2}v_{2,n}=m_{1}v'_{1,n}+m_{2}v'_{2,n}\\
 & m_{1}v_{1,t}+m_{2}v_{2,t}=m_{1}v'_{1,t}+m_{2}v'_{2,t}\\
 & m_{1}(v_{1,n}^{2}+v_{1,t}^{2})+m_{2}(v_{2,n}^{2}+v_{2,t}^{2})=m_{1}((v'_{1,n})^{2}\! +\! (v'_{1,t})^{2})\! +\! m_{2}((v'_{2,n})^{2}+(v'_{2,t})^{2}).
\end{align*}
Now we have $3$ equations and four unknowns. But assumptions about
elasticity condition and smoothness of the ball's boundary implies
that the tangential components of the velocities rest unchanged 
\[
v_{1,t}=v'_{1,t},\quad v_{2,t}=v'_{2,t}.
\]
Then, for normal components we get the same equations as in the case
of central elastic collision. Namely 
\[
v'_{1,n}=\alpha v_{1,n}+(1-\alpha)v_{2,n},\quad\alpha=\frac{m_{1}-m_{2}}{m_{1}+m_{2}}.
\]

Now we want to write down velocity transformation after collision
in the initial coordinate system. Note that 
\[
\vec{R}_{\perp}=(-\sin\phi,\cos\phi)^{T}.
\]
which gives: 
\begin{align*}
 & v_{1,n}=v_{1}^{x}\cos\phi+v_{1}^{y}\sin\phi,\quad v_{1,t}=-v_{1}^{x}\sin\phi+v_{1}^{y}\cos\phi,\\
 & v_{2,n}=v_{2}^{x}\cos\phi+v_{2}^{y}\sin\phi,\quad v_{2,t}=-v_{2}^{x}\sin\phi+v_{2}^{y}\cos\phi.
\end{align*}
and 
\begin{align*}
 & (v_{1}^{x})'=v'_{1,n}\cos\phi-v_{1,t}\sin\phi=(\alpha v_{1,n}+(1-\alpha)v_{2,n})\cos\phi-v_{1,t}\sin\phi=\\
 & =v_{1}^{x}(\alpha\cos^{2}\phi+\sin^{2}\phi)-v_{1}^{y}\frac{1-\alpha}{2}\sin2\phi+(1-\alpha)v_{2,n}\cos\phi,\\
 & (v_{1}^{y})'=v'_{1,n}\sin\phi+v_{1,t}\cos\phi=(\alpha v_{1,n}+(1-\alpha)v_{2,n})\sin\phi+v_{1,t}\cos\phi=\\
 & =-v_{1}^{x}\frac{1-\alpha}{2}\sin2\phi+v_{1}^{y}(\alpha\sin^{2}\phi+\cos^{2}\phi)+(1-\alpha)v_{2,n}\sin\phi.
\end{align*}
\end{proof}

For the next statement we assume that the interaction with the external
media is defined by the elastic collision with external particle of
mass $m$ and random velocity $\vec{v}\in\mathbb{{\bf R}}^{2}$. Introduce
the transformation $J$ as follows: 
\[
J(\xi;p)=G_{\alpha}(\phi)p+Mc_{\alpha}(\phi,\vec{v})\vec{R}(\phi),\quad\xi=(\phi,\vec{v}),\quad p\in\mathbb{R}^{2}
\]
where $\alpha=(M-m)/(M+m)$ and the matrix $G_{\alpha}$, vector
$R(\phi)$ and constant $c_{\alpha}(\phi,\vec{v})$ are as in Lemma
\ref{twodimColl}.

\begin{lemma} Assume that the support of the distribution density
of velocity $\vec{v}$ coincides with $\mathbb{{\bf R}}^{2}$, that
the second moment $a_{2}=E|\vec{v}|^{2}$ is finite and the density
support of the angle $\phi$ coincides with $[0,2\pi)$. Then Condition
2 holds for transformation $J$.
\end{lemma}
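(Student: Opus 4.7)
The plan is to rewrite $J$ in a geometrically transparent form and then verify the three items of Condition~2 in turn. Using $\cos^2\phi+\sin^2\phi = 1$, one checks the rank-one identity $G_\alpha(\phi) = E - (1-\alpha)\vec{R}(\phi)\vec{R}(\phi)^T$, so that
\[
J(\xi;p) = p + (1-\alpha)\bigl(M\vec{v}\cdot\vec{R}(\phi) - p\cdot\vec{R}(\phi)\bigr)\vec{R}(\phi),
\]
the familiar ``reflect the normal component'' formula. Item~1 is then immediate: $J$ is polynomial in $\vec{v},p$ with trigonometric (hence analytic) coefficients in $\phi$, so it is real-analytic on all of $\mathbb{R}\times\mathbb{R}^2\times\mathbb{R}^2$.

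For item~2 I would argue that the image $J(\mathcal{O}_\xi,p)$ is in fact all of $\mathbb{R}^2$, which \emph{a fortiori} contains the sphere of radius $|p|$ and realizes every norm $h\geq 0$. Given any target $p'\in\mathbb{R}^2$, put $w = p'-p$: if $w=0$, pick any $\phi$ and any $\vec{v}$ satisfying the single linear constraint $M\vec{v}\cdot\vec{R}(\phi)=p\cdot\vec{R}(\phi)$; if $w\ne 0$, choose $\phi$ with $\vec{R}(\phi)=w/|w|$ and then solve the scalar equation $(1-\alpha)(M\vec{v}\cdot\vec{R}(\phi)-p\cdot\vec{R}(\phi))=|w|$ for $\vec{v}\in\mathbb{R}^2$. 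In both cases $\xi=(\phi,\vec{v})\in\mathcal{O}_\xi$ and $J(\xi;p)=p'$.

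For item~3, set $a:=p\cdot\vec{R}(\phi)$ and $b:=M\vec{v}\cdot\vec{R}(\phi)$. A direct expansion from the rewritten formula yields
\[
|J(\xi;p)|^2 = |p|^2 - (1-\alpha^2)\,a^2 + 2\alpha(1-\alpha)\,ab + (1-\alpha)^2\,b^2.
\]
Taking expectation first in $\vec{v}$ (allowed since $E|\vec{v}|^2=a_2<\infty$) and then in $\phi$, and using $|a|\leq|p|$, one arrives at a bound of the form
\[
E_\xi|J(\xi;p)|^2 \;\leq\; |p|^2 - (1-\alpha^2)\,p^T M_\phi\,p + C_1|p| + C_2,
\]
where $M_\phi := E_\phi[\vec{R}(\phi)\vec{R}(\phi)^T]$ is a symmetric $2\times 2$ matrix and the constants $C_1,C_2$ depend only on $M,\alpha,a_2,|E\vec{v}|$.

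The main obstacle is to show that $M_\phi$ is strictly positive definite, so that the good quadratic term actually dominates. Writing out entries gives $\det M_\phi = E_\phi\cos^2\phi\cdot E_\phi\sin^2\phi - (E_\phi\sin\phi\cos\phi)^2$, which is strictly positive by Cauchy--Schwarz: equality would force $\sin\phi = \lambda\cos\phi$ almost surely, impossible because the density of $\phi$ is positive on all of $[0,2\pi)$. Hence $p^T M_\phi p\geq c|p|^2$ for some $c>0$, and since $|\alpha|<1$ we have $(1-\alpha^2)c>0$. For $|p|$ sufficiently large the linear and constant terms $C_1|p|+C_2$ are absorbed into a $(1-\alpha^2)c/2$-fraction of the leading term, giving $E_\xi|J(\xi;p)|^2 \leq \alpha''|p|^2$ for any fixed $\alpha''\in\bigl(1-(1-\alpha^2)c,\,1\bigr)$. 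Taking $K$ to be a sufficiently large closed ball in momentum space then verifies item~3 of Condition~2, and the lemma follows.
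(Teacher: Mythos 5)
Your proof is correct and takes essentially the same route as the paper: your rank-one identity $G_{\alpha}(\phi)=E-(1-\alpha)\vec{R}(\phi)\vec{R}(\phi)^{T}$ is the paper's tangential/normal decomposition in matrix form, your surjectivity argument for item~2 (choosing $\vec{R}(\phi)$ along $p'-p$ and solving a scalar equation for $\vec{v}$) matches the paper's choice $\vec{R}=(\vec{v}_{1}-\vec{v}^{*})/|\vec{v}_{1}-\vec{v}^{*}|$ (with the degenerate case $p'=p$ made explicit, which the paper glosses over), and your expansion $|J(\xi;p)|^{2}=|p|^{2}-(1-\alpha^{2})a^{2}+2\alpha(1-\alpha)ab+(1-\alpha)^{2}b^{2}$ together with the positive definiteness of $M_{\phi}=E_{\phi}[\vec{R}(\phi)\vec{R}(\phi)^{T}]$ is exactly the paper's argument with its quadratic form $F$. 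The only cosmetic difference is that you get $\det M_{\phi}>0$ by strictness in Cauchy--Schwarz, whereas the paper computes $\det F=\tfrac{1}{4}\bigl(1-(E\cos2\phi)^{2}-(E\sin2\phi)^{2}\bigr)$ and invokes the density of $\phi$; both rest on the same fact.
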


\begin{proof} The first item evidently holds. For the others we shall
use notation and assertions from the proof of Lemma \ref{twodimColl}.
Consider two arbitrary vectors $\vec{v}_{1},\vec{v}^{*}\in\mathbb{{\bf R}}^{2}$.
To prove point 2 it is sufficient to show that there exists $\phi\in[0,2\pi)$
and $\vec{v}_{2}\in\mathbb{{\bf R}}^{2}$, such that 
\begin{align}
v_{t}^{*}= & \ v_{1,t},\\
v_{n}^{*}= & \ \alpha v_{1,n}+(1-\alpha)v_{2,n},\label{L201428091}
\end{align}
where we used the expansion of our vectors in normal and tangential
components of $R(\phi)$: 
\[
\vec{v}^{*}=v_{n}^{*}\vec{R}(\phi)+v_{t}^{*}\vec{R}_{\perp}(\phi) .
\]
Similarly for the vectors $\vec{v}_{1},\vec{v}_{2}$ accordingly to
formula (\ref{beforeCollDec}). It is clear that, for any given $\vec{R}(\phi)$,
vector $\vec{v}_{2}$ can be chosen so that the equality (\ref{L201428091})
holds. Define vector $\vec{R}(\phi)$ as follows: 
\[
\vec{R}(\phi)=\frac{1}{|\vec{v}_{1}-\vec{v}^{*}|}(\vec{v}_{1}-\vec{v}^{*}).
\]
It is clear that 
\[
(\vec{v}_{1}-\vec{v}^{*},\vec{R}_{\perp}(\phi))=0,
\]
where $(,)$ is the standard euclidean scalar product on $\mathbb{{\bf R}}^{2}$.
Then 
\[
v_{t}^{*}=v_{1,t}.
\]
and we have proved that point $2$ holds.

Now we shall prove that point $3$ of Condition 2 on $J$ holds. To do this we
expand momentum $p=(p_{1},p_{2})\in\mathbb{{\bf R}}^{2}$ and the
velocity of external particle $\vec{v}$ in normal and tangential
components of the vector $\vec{R}(\phi)$: 
\[
p=p_{n}\vec{R}(\phi)+p_{t}\vec{R}_{\perp}(\phi),\quad\vec{v}=v_{n}\vec{R}(\phi)+v_{t}\vec{R}_{\perp}(\phi).
\]
Then as before we get the formula: 
\[
E_{\xi}|J(\xi;p)|^{2}=p_{t}^{2}+E_{\xi}(\alpha p_{n}+M(1-\alpha)v_{n})^{2}.
\]
Simple calculation gives: 
\[
E_{\xi}|J(\xi;p)|^{2}=|p|^{2}-(1-\alpha^{2})E_{\xi}\{p_{n}^{2}\}\! +\! 2M\alpha(1-\alpha)E_{\xi}\{p_{n}v_{n}\}+M^{2}(1-\alpha)^{2}E_{\xi}\{v_{n}^{2}\} .
\]
We want now to obtain lower estimate of $E_{\xi}\{p_{n}^{2}\}$. We
have: 
\[
E_{\phi}p_{n}^{2}=E_{\phi}(p_{1}\cos\phi+p_{2}\sin\phi)^{2}=f(p_{1},p_{2}),
\]
that defines the quadratic function $f$ on $\mathbb{{\bf R}}^{2}$.
It is clear that $f$ is a non-negatively defined quadratic form with
matrix 
\[
F=\left(\begin{array}{cc}
E\{\cos^{2}\phi\} & E\{\cos\phi\sin\phi\}\\
E\{\cos\phi\sin\phi\} & E\{\sin^{2}\phi\}
\end{array}\right).
\]
Let us prove that $F$ is non-degenerate. In fact, the determinant
of the matrix $F$ is equal to: 
\[
  \det(F)=E\{\cos^{2}\phi\}E\{\sin^{2}\phi\}-(E\{\cos\phi\sin\phi\})^{2}=
\]
\[
  =\frac{1}{4}(1+E\{\cos2\phi\})(1-E\{\cos2\phi\})-\frac{1}{4}(E\sin2\phi)^{2}=
\]
\begin{align*}
=\frac{1}{4}\left(1-((E\cos2\phi)^{2}+(E\sin2\phi)^{2})\right) .
\end{align*}
The distribution $\phi$ was assumed to have density, that is why,
\[
(E\cos2\phi)^{2}+(E\sin2\phi)^{2})<E(\cos^{2}2\phi+\sin^{2}2\phi)=1.
\]
Then 
\[
\det(F)\ne0
\]
and there exists a number $\lambda>0$ such that for all $p=(p_{1},p_{2})\in\mathbb{R}^{2}$
the inequality 
\[
E_{\xi}\{p_{n}^{2}\}=E_{\phi}(p_{1}\cos\phi+p_{2}\sin\phi)^{2}=f(p_{1},p_{2})\geqslant\lambda|p|
\]
holds. It follows that we have the following inequality for transformation
$J$: 
\[
E_{\xi}|J(\xi;p)|^{2}\leqslant\beta|p|^{2}+2M\alpha(1-\alpha)E_{\xi}\{p_{n}v_{n}\}+M^{2}(1-\alpha)^{2}E_{\xi}\{v_{n}^{2}\},
\]
where $\beta=1-\lambda(1-\alpha^{2})<1$. It remains to note that
$E_{\xi}\{p_{n}v_{n}\}$ is linear function of $p$, that proves point 
3.
\end{proof}

\subsection{Proof of Theorem \ref{chainConvThCommonX}.}

Note first that the third proposition follows from the first and second
assertions. Note also that it is the law of large numbers for Markov
chains (see \cite{Revuz}, p.\thinspace 140 and \cite{Skorohod} p.\thinspace 209).

We will use the definitions and the propositions from the book \cite{MT}.
Strong irreducibility property and weak Feller property imply that
the chain $\psi_{k}$ is $\mu$-irreducible, aperiodic (due to theorem
5.4.4, p.\thinspace 113) and satisfies the T-property (thanks to theorem 6.0.1,
p.\thinspace 124). Thus from the theorem 6.0.1 (p.\thinspace 124) it follows that any
compact subset of $X$ is a `petite' set.

Another short proof of the fact that any compact subset is a 'small'
set (property `small' is stronger than `petite') in our case one can
find in \cite{LM_6}. Lyapunov (drift) condition and theorem 9.1.8
(p.\thinspace 206) imply Harris recurrence property of $\psi_{k}$. For the
final proof it remains to use theorem 13.0.1 (p.\thinspace 313) from \cite{MT}.
In fact, we have proved Theorem \ref{thConvChain}.

Note that irreducibility, aperiodicity and T-property for the embedded
chain (\ref{initChain}) one could easily deduce from propositions
for the Nonlinear State Space model (NSS) (see \cite{MT}, p.\thinspace 146)
and our proofs of strong irreducibility and weak Feller property.

\subsection{Dissipative subspace}

Here we will prove Proposition \ref{prop_dissip} together with the
following Proposition containing more detailed information concerning
the dissipating space $L_{-}$ even in more general setting. Let us
remind the definition of $L_{-}$. For simplicity we write $N$ instead
of $dN$. Then fix any subset $\Lambda'\in\{1,\ldots ,N\}$ and define
$L_{-}=L_{-}(\Lambda')$ as the subspace 
\[
L_{-}(\Lambda')=\{(q,p):q,p\in l_{V}\}
\]
where 
\[
l_{V}=l_{V}(\Lambda')=<\{V^{k}e_{n}:n\in\Lambda',k=0,1,2,\ldots \}>,
\]
compare this definition with that in \cite{LM_4}.

$L_{0}=L_{0}(\Lambda')$ is defined as the orthogonal complement $L_{0}=L_{-}^{\perp}$
(with respect to scalar product $(,)_{2}$) to $L_{-}$.

Denote $\mathbf{H}_{ind}$ the subset of $\mathbf{H}$ with the spectrum
of $V$ such that $\omega_{1},\ldots ,\omega_{N}$ are rationally independent.

\begin{proposition}\label{prop61}
  $\phantom{a}$
\begin{enumerate}
\item The set $\mathbf{H}^{+}$ is open, everywhere dense in $\mathbf{H}$
and 
$
\lambda(\mathbf{H\smallsetminus}\mathbf{H}^{+})=0.
$ 
\item The complement of $\,\mathbf{H^{++}=\mathbf{H^{+}\cap}\mathbf{H_{ind}}}$
to $\mathbf{H}$ has Lebesgue measure zero. 
\item For any $n=1,2,\ldots,N$ denote $g_{n}=(0,e_{n})^{T}$, where $\{e_{n}$\}
is the standard basis in $\mathbb{{\bf R}}^{N}$. Then $L_{-}$ is
invariant w.r.t.\ $A$. Moreover, $L_{-}$ can be presented as follows:
\begin{equation}
L_{-}=\ \langle\{A^{k}g_{n}:\ k=0,1,\ldots;n\in\Lambda'\}\rangle,\label{L2014061421}
\end{equation}
\begin{equation}
L_{-}=\ \langle\{(A^{*})^{k}g_{n}:\ k=0,1,\ldots;n\in\Lambda'\}\rangle\label{L2014061422}
\end{equation}
where $A^{*}$ is the adjoint operator to $A$ in the scalar product
$(,)_{2}$, and angle brackets $\langle\ \rangle$ is the linear span
of the corresponding set of vectors.
\item Let $\psi(t)=(q(t),p(t))^{T},$ $q(t)=(q_{1}(t),\ldots,q_{N}(t))^{T},$ $p(t)=(p_{1}(t),\ldots,$ $p_{N}(t))^{T}$
denote the solution of the equation (\ref{LHamVec}) with initial
condition $\psi(0)=\psi$. Then $\psi\in L_{0}$ if and only if for
any $n\in\Lambda'$ and all $t\geqslant0$ the equality $p_{n}(t)=0$
holds. Moreover, the subspace $L_{0}$ is invariant with respect to
$A$. 
\end{enumerate}
\end{proposition}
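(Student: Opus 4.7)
The plan is to address parts 3 and 4 first by direct computation, then parts 1 and 2 by standard genericity arguments based on polynomial and real-analytic structure.

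For part 3, I would compute the action of $A$ on $g_n=(0,e_n)^T$ in block form: $Ag_n=(e_n/M,0)^T$, $A^2g_n=(0,-Ve_n/M)^T$, and inductively
\[
A^{2k}g_n=\bigl(0,(-1)^kM^{-k}V^ke_n\bigr)^T,\qquad A^{2k+1}g_n=\bigl((-1)^kM^{-(k+1)}V^ke_n,0\bigr)^T.
\]
A parallel computation with $A^*=A^T$ (which has $-V$ in the upper right block and $E/M$ in the lower left, using that $V$ is symmetric) yields analogous formulas. In both cases the linear span over $k\geq 0$ and $n\in\Lambda'$ coincides with $\{(q,p):q,p\in l_V(\Lambda')\}=L_-$, proving (\ref{L2014061421}) and (\ref{L2014061422}); the $A$-invariance of $L_-$ is then immediate from (\ref{L2014061421}). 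For part 4, the identity
\[
p_n(t)=(g_n,\psi(t))=\bigl(e^{tA^*}g_n,\psi\bigr)=\sum_{k\geq 0}\frac{t^k}{k!}\bigl((A^*)^kg_n,\psi\bigr)
\]
shows that $p_n(t)\equiv 0$ on $[0,\infty)$ for every $n\in\Lambda'$ is equivalent to $\bigl((A^*)^kg_n,\psi\bigr)=0$ for all $k\geq 0$ and $n\in\Lambda'$, which by (\ref{L2014061422}) is exactly $\psi\in L_0$. The $A$-invariance of $L_0$ follows from the $A^*$-invariance of $L_-$ (also immediate from (\ref{L2014061422})).

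For parts 1 and 2 I would use genericity. By Cayley--Hamilton, completeness is equivalent to the non-vanishing of some $N\times N$ minor of the matrix whose columns are $V^ke_n$ with $0\leq k\leq N-1$, $n\in\Lambda'$. These minors are polynomials in the entries of the symmetric matrix $V$, so $\mathbf{H}^+$ is open in $\mathbf{H}$; to conclude density and full measure it suffices to exhibit one element of $\mathbf{H}^+$ (for instance, a small positive-definite perturbation of a tridiagonal matrix, for which $e_1$ is cyclic), since a nontrivial polynomial has measure-zero zero set. For part 2, on the open dense full-measure subset $\mathbf{H}_{simple}\subset\mathbf{H}$ of matrices with simple spectrum, the frequencies $\omega_i(V)$ are real-analytic; for each non-zero $c\in\mathbb{Z}^N$ the set $\{V\in\mathbf{H}_{simple}:\sum_ic_i\omega_i(V)=0\}$ is a real-analytic subset, proper because one may take $V$ diagonal with algebraically independent (hence rationally independent) square-roots-of-eigenvalues, and hence of measure zero. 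A countable union over $c\in\mathbb{Z}^N\setminus\{0\}$ remains measure zero, and combined with part 1 this yields $\lambda(\mathbf{H}\setminus\mathbf{H}^{++})=0$.

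The computational parts 3 and 4 are essentially routine once the inductive formulas for $A^kg_n$ are set up. The main obstacle is the need in parts 1 and 2 to produce explicit witnesses (a positive-definite symmetric $V$ with cyclic $\{e_n:n\in\Lambda'\}$, and a positive-definite symmetric $V$ with rationally independent frequencies) showing that the corresponding exceptional sets are proper rather than everything; once this is arranged, the rest is a standard application of the fact that a nontrivial real-analytic subset of a connected open set in Euclidean space has Lebesgue measure zero.
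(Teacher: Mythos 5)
Your parts 3 and 4 run along exactly the paper's own lines: the same inductive block computation of $A^{2k}g_n$ and $A^{2k+1}g_n$ (the paper simply normalizes $M=1$), the same expansion $p_n(t)=\sum_k \frac{t^k}{k!}\bigl(\psi,(A^*)^k g_n\bigr)_2$ for the equivalence defining $L_0$, and the same derivation of invariance of $L_0$ from $A^*$-invariance of $L_-$. One small step you gloss over there: the formulas for $A^*$ are \emph{not} literally analogous to those for $A$ --- the odd powers give $(A^*)^{2k+1}g_n\propto (V^{k+1}e_n,0)^T$, so the first-component span is a priori $V(l_V)$ rather than $l_V$, and you need $V(l_V)=l_V$; the paper supplies precisely this via Cayley--Hamilton together with $\det V\neq 0$ (positive definiteness), and your write-up should too. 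Where you genuinely diverge is parts 1--2. For part 1 the paper just cites an external lemma (\cite{LM_4}, Lemma 2.3) that $\mathbf{H}\smallsetminus\mathbf{H}^{+}$ is an algebraic manifold, whereas your Krylov-minor polynomial argument with a tridiagonal witness (make sure it is positive definite with nonzero off-diagonals, e.g.\ a shifted discrete Laplacian, for which cyclicity of $e_1$ is standard) is self-contained --- a real gain. For part 2 the paper avoids eigenvalue regularity entirely: it writes the exceptional set as $\bigcup_{\alpha\in\mathbb{Q}^N\setminus\{0\}}\mathbf{H}_\alpha$, where $\mathbf{H}_\alpha$ is the image of $O(N)\times D_\alpha$ under conjugation, $D_\alpha$ being the codimension-one diagonal slice $\{\sum_k\alpha_k\sqrt{x_k}=0\}$, and kills each $\mathbf{H}_\alpha$ by the dimension count $N-1+N(N-1)/2<N(N+1)/2=\dim\mathbf{H}$. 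Your route through real-analyticity of the ordered frequencies $\omega_i(V)$ on the simple-spectrum locus is also sound in outline, but it carries a subtlety the paper's orbit count sidesteps: a single diagonal witness with rationally independent $\sqrt{\mu_k}$ rules out identical vanishing of $\sum_i c_i\omega_i$ only on the connected component containing it, so you must additionally check that $\mathbf{H}_{simple}$ is connected (it is, since the discriminant locus has codimension two in the symmetric matrices) or argue componentwise, e.g.\ via the local submersion of a simple-spectrum symmetric matrix onto its eigenvalue tuple. With those two patches your proposal is a correct and somewhat more self-contained variant of the paper's argument.
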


More physically: the initial conditions from $L_{0}$ are the conditions
where momentum $p_{k}(t)=0$ for all $n\in\Lambda'$ and $t\geq0$.

\paragraph{Proof of 1 and 2}

The fact that $\mathbf{H}^{+}$ is open is evident. It is sufficient
to prove that $\lambda(\mathbf{H}\smallsetminus\mathbf{H}^{+})\! =\! 0$
and $\lambda(\mathbf{H}\smallsetminus\mathbf{H}_{ind})=0$. The first
equality $\lambda(\mathbf{H}\smallsetminus\mathbf{H}^{+})\! =\! 0$ follows
from the fact that $\mathbf{H}\smallsetminus\mathbf{H}^{+}$ is an
algebraic manifold (\hspace{-0.1pt}\cite{LM_4}, lemma 2.3). Let us prove the second
equality. Without loss of generality we can assume that $d=1$. For
the array of numbers $\alpha=(\alpha_{1},\ldots,\alpha_{N})\in\mathbb{R}^{N}$
we define the subset $D_{\alpha}\subset\mathbf{H}$ of the diagonal
matrices with positive diagonal elements $x_{1},\ldots,x_{N}$ such
that 
\[
\sum_{k=1}^{N}\alpha_{k}\sqrt{x}_{k}=0.
\]
Consider the subset $\mathbf{H}_{\alpha}\subset\mathbf{H}$ of matrices
with eigenvalues $\mu_{1},\ldots,\mu_{N}$ such that 
\[
\sum_{k=1}^{N}\alpha_{k}\sqrt{\mu_{k}}=0.
\]
It is easy to see that 
\[
\mathbf{H}_{\alpha}=\{C^{-1}DC:\ D\in D_{\alpha},\ C\in O(N)\},
\]
where $O(N)$ is a set of the orthogonal matrices. In other words,
$\mathbf{H}_{\alpha}$ is an orbit of the $D_{\alpha}$ under the
action of $O(N)$ by the conjugation. So the dimension of $\mathbf{H}_{\alpha}$
equals $N-1+N(N-1)/2<N+N(N-1)/2=N(N+1)/2=\mathrm{dim}(\mathbf{H})$
and hence $\lambda(\mathbf{H}_{\alpha})=0$. We have 
\[
\mathbf{H}_{ind}=\cup_{\alpha\in\mathbb{Q}^{N}}\mathbf{H}_{\alpha}.
\]
And consequently $\lambda(\mathbf{H\smallsetminus}\mathbf{H}_{ind})=0$.
Then Proposition \ref{prop_dissip} is also proved.

\textbf{Proof of 3.} Invariance follows from (\ref{L2014061421}).
Since
\[
A^{2}=-\left(\begin{array}{cc}
V & 0\\
0 & V
\end{array}\right) ,
\]
we get for any $k=0,1,\ldots$ and any $n=1,\ldots,N$ 
\[
A^{2k}g_{n}=(-1)^{k}\left(\begin{array}{c}
0\\
V^{k}e_{n}
\end{array}\right)\quad A^{2k+1}g_{n}=(-1)^{k}\left(\begin{array}{c}
V^{k}e_{n}\\
0
\end{array}\right).
\]
The (\ref{L2014061421}) follows from this. Moreover we have: 
\[
A^{*}=\left(\begin{array}{cc}
0 & -V\\
E & 0
\end{array}\right),\quad(A^{*})^{2}=-\left(\begin{array}{cc}
V & 0\\
0 & V
\end{array}\right)=A^{2}.
\]
Hence for any $k=0,1,\ldots$ and any $n=1,\ldots,N$ we have: 
\[
(A^{*})^{2k}g_{n}=A^{2k}g_{n}\quad(A^{*})^{2k+1}g_{n}=(-1)^{k}\left(\begin{array}{c}
V^{k+1}e_{n}\\
0
\end{array}\right).
\]
Thus we proved that 
\[
\{(A^{*})^{k}g_{n}:\ k=0,1,\ldots;n\in\Lambda'\}=\Bigl\{\Bigl(\begin{array}{c}
q\\
p
\end{array}\Bigr):\ q\in V(l_{V})\ p\in l_{V}\Bigr\}.
\]
Then by Hamilton\tire Cayley theorem for any $n=1,\ldots,N$ we have 
\[
e_{n}\in\langle Ve_{n},\ldots,V^{N}e_{n}\rangle .
\]
 It follows that $V(l_{V})=l_{V}$, that proves (\ref{L2014061422}).

\textbf{Proof of 4.}
For any $\psi\in L$ we have 
\[
e^{tA}\psi=\sum_{k=0}^{\infty}\frac{t^{k}}{k!}A^{k}\psi.
\]
If $n\in\Lambda'$ then for the corresponding momenta: 
\begin{equation}
p_{n}(t)=(e^{tA}\psi,g_{n})_{2}=\sum_{k=0}^{\infty}\frac{t^{k}}{k!}(A^{k}\psi,g_{n})_{2}=\sum_{k=0}^{\infty}\frac{t^{k}}{k!}(\psi,(A^{*})^{k}g_{n})_{2}.\label{20140614321}
\end{equation}
If $\psi\in L_{0}$, then due to Lemma \ref{zeroSet-1} all coefficients
in the latter expansion equal zero, and then $p_{n}(t)=0$ for any
$t\geqslant0$ and any $n\in\Lambda'$. Vice-versa, assume that $p_{n}(t)=0$
for all $t\geqslant0$ and all $n\in\Lambda'$. From expansion (\ref{20140614321})
it follows that $(\psi,(A^{*})^{k}g_{n})_{2}=0$ for all $k=0,1,\ldots$
and all $n\in\Lambda'$. Using Lemma \ref{zeroSet-1} we get that
$\psi\in L_{0}$. Thus, we have proven the first assertion of the
Lemma \ref{-L_gaussian}. Invariance of $L_{0}$ follows because $L_{-}$
is invariant with respect to $A^{*}$ (this is direct consequence
of the formula (\ref{L2014061422})).

Now we want to formulate some results concerning the dimension of
the dissipative subspace in terms of the spectrum of $V$.

\paragraph{The dimension of the dissipative subspace}

Denote $l=\mathbb{{\bf R}}^{N}$. Let $\lambda$ be an eigenvalue
of $V$. Then the subspace 
\[
l(\lambda)=\{u\in l:\ Vu=\lambda u\}
\]
we call the corresponding eigensubspace. Let $\sigma(V)$ be the spectrum
of $V$. As $V$ is symmetric then $l(\lambda)\perp l(\lambda')$
for $\lambda\ne\lambda'$. The space $l$ can be presented as the
direct sum of such eigensubspaces: 
\begin{equation}
l=\bigoplus_{\lambda\in\sigma(V)}l(\lambda). \label{05081310}
\end{equation}
Consider the subspace 
\[
l_{\Lambda'}=\langle\{e_{i}\}_{i\in\Lambda'}\rangle\subset l_{V} .
\]
For $\lambda\in\sigma(V)$ denote $l_{\Lambda'}(\lambda)$ the orthogonal
projection of $l_{\Lambda'}$ onto $l(\lambda)$.

\begin{theorem} \label{0509133} The following expansion holds: 
\[
l_{V}=\bigoplus_{\lambda\in\sigma(V)}l_{\Lambda'}(\lambda).
\]
\end{theorem}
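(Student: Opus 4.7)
The plan is to establish the two inclusions of the claimed equality separately, and to check that the sum on the right is actually direct (it will in fact be orthogonal). Throughout, I would exploit that $V$ is symmetric, so that the eigenspace decomposition (\ref{05081310}) is orthogonal and each $e_n$ with $n\in\Lambda'$ has a well-defined orthogonal component $e_n(\lambda)\in l(\lambda)$ for every $\lambda\in\sigma(V)$.

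For the inclusion $\bigoplus_{\lambda}l_{\Lambda'}(\lambda)\subseteq l_{V}$, the key device is the spectral projector written as a polynomial in $V$ via Lagrange interpolation. Namely, for each $\lambda\in\sigma(V)$ set
\[
P_{\lambda}(x)=\prod_{\mu\in\sigma(V),\,\mu\ne\lambda}\frac{x-\mu}{\lambda-\mu}.
\]
Then $P_{\lambda}(V)$ is the orthogonal projection of $l$ onto the eigensubspace $l(\lambda)$. Since by definition $l_{V}$ contains every vector of the form $Q(V)e_{n}$ for $n\in\Lambda'$ and any polynomial $Q$, it follows that $P_{\lambda}(V)e_{n}=e_{n}(\lambda)\in l_{V}$. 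Taking the linear span of such projections over all $n\in\Lambda'$ gives $l_{\Lambda'}(\lambda)\subset l_{V}$, and summing over $\lambda$ yields the desired inclusion.

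For the converse inclusion $l_{V}\subseteq\bigoplus_{\lambda}l_{\Lambda'}(\lambda)$, I would use the spectral decomposition to write, for each $n\in\Lambda'$ and each $k\geqslant 0$,
\[
V^{k}e_{n}=\sum_{\lambda\in\sigma(V)}\lambda^{k}\,e_{n}(\lambda),
\]
where $\lambda^{k}e_{n}(\lambda)\in l_{\Lambda'}(\lambda)$ because $l_{\Lambda'}(\lambda)$ is a linear subspace of $l(\lambda)$. Since $l_{V}$ is the linear span of the vectors $V^{k}e_{n}$, taking linear combinations and using that the sum over $\lambda$ is a sum of vectors in distinct eigensubspaces gives $l_{V}\subseteq\sum_{\lambda}l_{\Lambda'}(\lambda)$. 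Finally, the sum on the right is direct (in fact orthogonal) because the eigensubspaces $l(\lambda)$ are mutually orthogonal and $l_{\Lambda'}(\lambda)\subseteq l(\lambda)$, so any relation of linear dependence between components in different $l_{\Lambda'}(\lambda)$ forces each component to vanish.

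The only genuinely non-trivial step is the forward inclusion, where one must produce the projected vectors $e_{n}(\lambda)$ inside $l_{V}$. Without symmetry one would have no reason to expect the orthogonal projectors onto the eigenspaces to be polynomials in $V$, but the Lagrange interpolation formula applied to the finite spectrum $\sigma(V)$ supplies exactly such polynomial representations, and this is the mechanism that ties the Krylov-type description of $l_{V}$ to the spectral description in the statement.
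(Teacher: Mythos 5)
Your proof is correct, and its second half coincides with the paper's: the expansion $V^{j}e_{i}=\sum_{\lambda\in\sigma(V)}\lambda^{j}u_{\lambda}$ with $u_{\lambda}=e_{i}(\lambda)\in l_{\Lambda'}(\lambda)$ is exactly the paper's formula (\ref{0509131}), and orthogonality of the eigensubspaces yields directness of the sum in both treatments. Where you genuinely differ is the forward inclusion. The paper never writes a spectral projector as a polynomial in $V$: instead it uses that $l_{V}$ is $V$-invariant, decomposes $l_{V}=\bigoplus_{\lambda\in\sigma(V)}l_{V}(\lambda)$ into eigensubspaces of the restriction of $V$ to $l_{V}$, observes that by orthogonality $l_{V}(\lambda)$ is precisely the orthogonal projection of $l_{V}$ onto $l(\lambda)$, and then obtains $l_{\Lambda'}(\lambda)\subset l_{V}(\lambda)$ simply because $l_{\Lambda'}\subset l_{V}$. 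Your Lagrange device, the polynomial $P_{\lambda}(x)=\prod_{\mu\in\sigma(V),\,\mu\ne\lambda}(x-\mu)/(\lambda-\mu)$ evaluated at $V$, proves the same containment constructively: since $P_{\lambda}(\mu)=\delta_{\lambda\mu}$ on $\sigma(V)$ and $V$ is symmetric with the orthogonal decomposition (\ref{05081310}), $P_{\lambda}(V)$ is the orthogonal spectral projector onto $l(\lambda)$, and $P_{\lambda}(V)e_{n}=e_{n}(\lambda)\in l_{V}$ because $l_{V}$ is by definition closed under applying polynomials in $V$ to the $e_{n}$, $n\in\Lambda'$. The two mechanisms are essentially equivalent --- your computation is in effect a direct proof that projecting the $V$-invariant subspace $l_{V}$ onto $l(\lambda)$ does not leave $l_{V}$ --- but yours is more explicit and self-contained, needing only the global decomposition of $l$ and no structure theory for the restricted operator, whereas the paper's version is shorter once one accepts the standard fact that an invariant subspace of a symmetric operator decomposes into eigensubspaces of the restriction.
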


\begin{proof} As $l_{V}$ is invariant w.r.t.\ to $V$, there exists the presentation
of $l_{V}$ as the direct sum of eigensubspaces for the restriction
of $V$ onto $l_{V}$: 
\[
l_{V}=\bigoplus_{\lambda\in\sigma(V)}l_{V}(\lambda),
\]
where $l_{V}(\lambda)=\{u\in l_{V}:\ Vu=\lambda u\}$. Let us prove
that 
\[
l_{V}(\lambda)=l_{\Lambda'}(\lambda)
\]
for all $\lambda\in\sigma(V)$. From the orthogonality of eigensubspaces
it follows that $l_{V}(\lambda)$ is the orthogonal projection of
$l_{V}$ on $l(\lambda)$. Thus, $l_{\Lambda'}(\lambda)\subset l_{V}(\lambda)$.
Let us show that the reverse assertion also holds. Following (\ref{05081310})
we can represent the vector $e_{i},\ i\in\Lambda'$ as: 
\[
e_{i}=\sum_{\lambda\in\sigma(V)}u_{\lambda},
\]
for some $u_{\lambda}\in l(\lambda),\ \lambda\in\sigma(V)$. From
the definition of $l_{\Lambda'}(\lambda)$ it follows that $u_{\lambda}\in l_{\Lambda'}(\lambda)$
for all $\lambda\in\sigma(V)$. Then for any $j=0,1,\ldots$ we get:
\begin{equation}
V^{j}e_{i}=\sum_{\lambda\in\sigma(V)}\lambda^{j}u_{\lambda} . \label{0509131}
\end{equation}
From (\ref{0509131}) it follows that the projection of $V^{j}e_{i}$
on $l(\lambda)$ equals $\lambda^{j}u_{\lambda}$ and thus it belongs
to $l_{\Lambda'}(\lambda)$. Then the projection of $l_{V}$ on $l(\lambda)$,
equals $l_{V}(\lambda)$ and thus belongs to $l_{\Lambda'}(\lambda)$.
The theorem is thus proved.
\end{proof}

From this theorem we have the following corollary.

\begin{lemma} If $L_{-}=L$, then the multiplicity of any eigenvalue
of $V$ cannot exceed $|\Lambda'|$, where $|\Lambda'|$ is the number
of boundary indices $\Lambda'$.
\end{lemma}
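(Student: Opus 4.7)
The plan is to derive this bound as a direct consequence of Theorem \ref{0509133}. Recall that by definition of $L_{-}$, the condition $L_{-} = L$ is equivalent to $l_{V} = l = \mathbb{R}^{N}$. Combining this with the conclusion of Theorem \ref{0509133}, we get the equality
\[
l = \bigoplus_{\lambda \in \sigma(V)} l_{\Lambda'}(\lambda),
\]
where $l_{\Lambda'}(\lambda)$ is the orthogonal projection of $l_{\Lambda'} = \langle \{e_{i}\}_{i \in \Lambda'}\rangle$ onto the eigensubspace $l(\lambda)$. On the other hand, since the eigensubspaces themselves already give the orthogonal decomposition $l = \bigoplus_{\lambda \in \sigma(V)} l(\lambda)$, comparing dimensions summand by summand forces
\[
l_{\Lambda'}(\lambda) = l(\lambda) \qquad \text{for every } \lambda \in \sigma(V).
\]

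The second ingredient is the elementary observation that an orthogonal projection cannot increase dimension: $\dim l_{\Lambda'}(\lambda) \leqslant \dim l_{\Lambda'}$. Since $l_{\Lambda'}$ is spanned by the $|\Lambda'|$ vectors $\{e_{i}\}_{i \in \Lambda'}$, we have $\dim l_{\Lambda'} = |\Lambda'|$. Combined with the previous equality, this yields
\[
\dim l(\lambda) = \dim l_{\Lambda'}(\lambda) \leqslant |\Lambda'|
\]
for each $\lambda \in \sigma(V)$, which is precisely the claim, since the multiplicity of an eigenvalue $\lambda$ of the symmetric matrix $V$ equals $\dim l(\lambda)$.

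There is essentially no obstacle to overcome here — the statement is a short corollary of Theorem \ref{0509133}. The only subtlety worth flagging is to make explicit that $L_{-} = L$ is indeed equivalent to $l_{V} = l$ (this is immediate from the definition $L_{-} = \{(q,p) : q, p \in l_{V}\}$), and to note that the projections $l_{\Lambda'}(\lambda)$ for different $\lambda$ lie in mutually orthogonal subspaces, so that dimensions add in the direct sum decomposition.
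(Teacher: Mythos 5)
Your proof is correct and follows essentially the same route as the paper: both deduce $l_{\Lambda'}(\lambda)=l(\lambda)$ from $L_{-}=L$ (i.e.\ $l_{V}=l$) via Theorem \ref{0509133}, and then bound $\dim l(\lambda)=\dim l_{\Lambda'}(\lambda)\leqslant\dim l_{\Lambda'}\leqslant|\Lambda'|$ using the fact that an orthogonal projection does not increase dimension. Your version is slightly more explicit than the paper's (spelling out the dimension-count forcing the equality and the equivalence $L_{-}=L\Leftrightarrow l_{V}=l$), but there is no substantive difference.
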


\begin{proof} From the definition of $l_{\Lambda'}(\lambda)$ it follows
that for all $\lambda\in\sigma(V)$ the following inequality holds
\[
\dim(l_{\Lambda'}(\lambda))\leqslant\dim(l_{\Lambda'})\leqslant|\Lambda'| .
\]
On the other hand, the condition $L_{-}=L$ is equivalent to that
$l_{V}=l$. This gives the equality $l_{\Lambda'}(\lambda)=l(\lambda)$
for all $\lambda\in\sigma(V)$.
\end{proof}

In particular for $|\Lambda'|=1$ the necessary (but, as we will see
below, not sufficient) condition of completeness of the dissipative
subspace is that the spectrum of $V$ is simple.

\begin{lemma} \label{0509132} Let the spectrum of $V$ be simple
and let $\{v_{1},\ldots,v_{N}\}$ be the basis of the space $l$,
where $v_{k}$ are the eigenvectors of $V$. Then 
\[
\dim(L_{0})=2\#\{k\in\{1,\ldots,N\}:\ v_{k}\in\ l_{\Lambda'}^{\perp}\}.
\]
\end{lemma}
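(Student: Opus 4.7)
The plan is to reduce the computation of $\dim(L_0)$ to counting how many eigenvectors of $V$ lie in $l_{\Lambda'}^{\perp}$, by using the decomposition already established in Theorem \ref{0509133}.

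First I would recall that $L_-=\{(q,p):q,p\in l_V\}$, so $\dim(L_-)=2\dim(l_V)$, and since $L_0=L_-^{\perp}$ inside $L=\mathbb{R}^{2N}$, one has $\dim(L_0)=2N-2\dim(l_V)$. Thus the whole problem reduces to computing $\dim(l_V)$ in the simple-spectrum case. Theorem \ref{0509133} gives the orthogonal decomposition
\[
l_V=\bigoplus_{\lambda\in\sigma(V)}l_{\Lambda'}(\lambda),
\]
where $l_{\Lambda'}(\lambda)$ is the orthogonal projection of $l_{\Lambda'}$ onto the eigenspace $l(\lambda)$.

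The key observation is that under the simple-spectrum assumption each eigenspace $l(\lambda_k)$ is one-dimensional, spanned by $v_k$. Therefore the projection $l_{\Lambda'}(\lambda_k)$ is either equal to $\langle v_k\rangle$ (contributing $1$ to the dimension) or to $\{0\}$ (contributing nothing). It vanishes precisely when every $e_i$ with $i\in\Lambda'$ is orthogonal to $v_k$, i.e.\ when $v_k\in l_{\Lambda'}^{\perp}$. Hence
\[
\dim(l_V)=\#\{k:v_k\notin l_{\Lambda'}^{\perp}\}=N-\#\{k:v_k\in l_{\Lambda'}^{\perp}\}.
\]

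Plugging this into $\dim(L_0)=2N-2\dim(l_V)$ immediately yields the claimed formula. I do not expect a real obstacle here: the whole content has been absorbed into Theorem \ref{0509133}, and the only additional ingredient is the trivial remark that a one-dimensional subspace is annihilated by an orthogonal projection iff its generator is orthogonal to the subspace being projected. The cleanest presentation is therefore to state these three lines and conclude.
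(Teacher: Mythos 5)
Your proof is correct and takes essentially the same route as the paper's: both rest on Theorem \ref{0509133}, observing that under a simple spectrum each one-dimensional eigenspace $\langle v_{k}\rangle$ survives in $l_{V}$ exactly when $v_{k}\notin l_{\Lambda'}^{\perp}$, and then counting $\dim(L_{0})=2N-2\dim(l_{V})$. Your write-up merely makes explicit the dimension bookkeeping that the paper leaves implicit.
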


Otherwise speaking Lemma \ref{0509132} says that in case of simple
spectrum of $V$ the dimension of $L_{0}$ is equal to the double
number of the eigenvectors of $V$, having all ``boundary'' coordinates
$i\in\Lambda'$ zero.

\begin{proof} By Lemma \ref{0509133} $l_{V}$ is spanned by those vectors
$v_{k}$, for which $(v_{k},e_{i})_{1}\ne0$ at least for one index
$i\in\Lambda'$. It follows that the complement to $l_{V}$ coincides
with the linear span of the vectors $v_{k}$ for which $(v_{k},e_{i})_{1}=0$
for all $i\in\Lambda'$. The lemma is proven.
\end{proof}

\end{document}